\documentclass[preprint,12pt]{elsarticle}

\usepackage{amsmath,amsfonts,amsthm,amssymb,paralist,subfigure,graphicx,amsbsy,float,epsfig,color}
\usepackage{cuted,mathtools,lipsum}

\usepackage[ruled,vlined,linesnumbered]{algorithm2e}
%%\usepackage[toc]{appendix}
%%\usepackage[natbibapa]{apacite}
%\usepackage{blkarray}
%\usepackage{bbm}
%\usepackage{bm}
%\usepackage[makeroom]{cancel}
%\usepackage{cases}
%\usepackage{changebar}
%\usepackage[usenames,dvipsnames]{xcolor}
%\usepackage{dsfont}
%\usepackage{epsfig}
%\usepackage{graphicx}
%\usepackage{flushend}
%\usepackage{textcomp}
%\usepackage{xcolor}
%\usepackage[UKenglish,english]{babel}
%\usepackage{flushend}
%\usepackage{graphicx}
%%\usepackage[latin9]{inputenc}
%\usepackage{mathtools}
%\usepackage{mathrsfs}
%\usepackage{setspace}
%\usepackage{soul}
%\usepackage{subfigure}
%\usepackage{stfloats}
%\usepackage{tikz}
%%\usepackage{verbatim}
%\usepackage{xspace}

% \let\proof\relax
%\let\endproof\relax
%\let\labelindent\relax

%\usepackage{url}            % simple URL typesetting
%\usepackage{booktabs}       % professional-quality tables
% \usepackage{amsfonts,amsmath,amssymb}       % blackboard math
% \usepackage{amsthm}     % new theorems, definitions
%\usepackage{nicefrac}       % compact symbols for 1/2, etc.
% \usepackage{microtype}      % microtypography
%\usepackage[sc]{mathpazo}   % fonts
% \usepackage{enumitem}

%% Packages added by the user
%\usepackage{baskervillef}
%\usepackage{caption}
% \usepackage{subcaption}
%\usepackage{pgfplots}
%\pgfplotsset{compat=newest}
%\usetikzlibrary{plotmarks}
%\usetikzlibrary{arrows.meta}
%\usepgfplotslibrary{colorbrewer}
%\usepackage{multicol}
%\usepackage{breqn}

%\usepackage{tabularx}

%\usepackage{hyperref}       % hyperlinks
%\hypersetup{
%    colorlinks=true,
%    linkcolor=cyan,
%    filecolor=mnodea,
%    urlcolor=cyan,
%    citecolor=lime,
%}
\usepackage{stmaryrd,url}

%%%%%%%%%%%%%%%%%

\usepackage{amsthm}
\newtheorem{lem}{Lemma}
\newtheorem{ass}{Assumption}

\newtheorem{rem}{Remark}

\newtheorem{cor}{Corollary}

\def\mb{\mathbf}
\def\mbb{\mathbb}
\def\mc{\mathcal}

\def\mb{\mathbf}
\def\mbb{\mathbb}
\def\mc{\mathcal}

\journal{Systems \& Control Letters}

\begin{document}
	
	\begin{frontmatter}
		
		\title{ Delay-Tolerant Augmented-Consensus-based  Distributed Directed Optimization
		}
		
		\author[Sem]{Mohammadreza Doostmohammadian}
		\affiliation[Sem]{Mechatronics Group, Faculty of Mechanical Engineering, Semnan University, Semnan, Iran, doost@semnan.ac.ir.}
		\author[nr]{Narahari Kasagatta Ramesh}
		\affiliation[nr]{School of Electrical Engineering, Aalto University, Espoo, Finland,  narahari.kasagattaramesh@aalto.fi}
		
		\author[SP]{Alireza Aghasi}
		\affiliation[SP]{Electrical Engineering and Computer Science Department, Oregon  State University, USA, alireza.aghasi@oregonstate.edu}
		
		\begin{abstract}
			Distributed optimization finds applications in large-scale machine learning, data processing and classification over multi-agent networks. In real-world scenarios, the communication network of agents may encounter latency that may affect the convergence of the optimization protocol.
			This paper addresses the case where the information exchange among the agents (computing nodes) over data-transmission channels (links) might be subject to communication time-delays, which is not well addressed in the existing literature. Our proposed algorithm improves the state-of-the-art by handling heterogeneous and arbitrary but bounded and fixed (time-invariant) delays over general strongly-connected directed networks. Arguments from matrix theory, algebraic graph theory, and augmented consensus formulation are applied to prove the convergence to the optimal value. Simulations are provided to verify the results and compare the performance with some existing delay-free algorithms.
		\end{abstract}
		
		%%Graphical abstract
%		\begin{graphicalabstract}
%			\includegraphics{grabs}
%		\end{graphicalabstract}
		
		%%Research highlights
		\begin{highlights}
			\item Introducing a new distributed optimization technique based on algebraic graph theory and augmented consensus protocols.
			\item Handling heterogeneous, arbitrary, time-invariant but bounded delays over general strongly-connected directed networks
		\end{highlights}
		
		\begin{keyword}
			time-delay \sep distributed optimization \sep graph theory \sep machine learning \sep augmented consensus 
		\end{keyword}

	\end{frontmatter}
	
	\section{Introduction} \label{sec_intro}
	In recent years, the study of distributed (or decentralized) algorithms for optimization, learning, and classification over a network of computing nodes/agents has gained significant attention due to adavances in cloud-computing and parallel data processing \cite{ageed2024distributed}. These networks consist of multiple agents, each with limited computational and communication capabilities, working collaboratively to solve optimization problems or learn from data in a distributed manner. However, a critical challenge in these networks is the presence of time-delays \cite{zhang2022time,DOOSTMOHAMMADIAN2025106062}, which can arise from communication latencies, processing times, or network congestion. Time-delays can severely impact the performance and convergence of distributed algorithms, making it essential to develop robust methods that can handle such delays. This paper explores the theoretical foundations and practical implementations of distributed optimization and learning algorithms that are resilient to time-delays, providing mathematical proofs, analysis, and potential applications.
	
	\subsection{Problem and Contributions}
	%\subsection{Distributed Optimization with No Time-Delay}
	In distributed optimization, the idea is to optimize a cost function (or loss function) over a network of computing nodes. The objective function is the sum of some local cost functions at the nodes, and the goal is to optimize this objective using locally defined gradient-based algorithms. The common form of the optimization problem is, 
	\begin{align}
		\min_{\mb{z}}
		F(\mb z) = \sum_{i=1}^{N} f_i(\mb{z})
		\label{eq_prob}
	\end{align}
	with state parameter $\mb{z} \in \mathbb{R}^{m}$. Functions $f_i:\mathbb{R}^m \mapsto \mathbb{R}$ are strongly convex, differentiable with Lipschitz gradients, and denote the objective function (cost, loss, etc.) at computing node $i$. It is assumed that the optimal point $\underline{\mb{z}}^* = \min_{\mb{z}} F(\mb z)$ for this problem exists.   The primary work \cite{nedic2009distributed} introduces subgradient algorithms to solve this problem. 
	ADD-OPT algorithm \cite{addopt} and its recent stochastic version S-ADD-OPT \cite{qureshi2020s} are popular algorithms to solve problem~\eqref{eq_prob}. These algorithms work over strongly-connected directed networks with irreducible column stochastic adjacency matrices, and are granted with (i) constant step-size in contrast to existing diminishing step-size algorithms, (ii) providing accelerated convergence by tuning the step-size over a wide range, and (iii) linear convergence rate for strongly convex cost functions. Other existing distributed algorithms include: event-triggered-based second-order multi-agent systems \cite{xu2023fully,CAI2023105619}, double step-size solutions for nonsmooth optimization \cite{yi2021distributed},  reduced-complexity and flexible algorithms \cite{liang2024hierarchically}, primal-dual subgradient-based solutions \cite{zhu2023primal}, EXTRA algorithm for first-order consensus-based optimization \cite{shi2015extra}, push-pull gradient-based methods \cite{nedic2014distributed}, and the solutions based on alternating direction method of multipliers (ADMM) \cite{zarepisheh2018computation,chang2014multi,song2016fast,LU2024105698}. The literature also includes distributed constrained optimization with application to resource allocation under time-delay. For, example, DTAC-ADMM discusses ADMM-based distributed resource allocation under time-delay \cite{cdc22}. Similarly, \textit{asynchronous} ADMM-based resource allocation algorithms are proposed in \cite{jiang2022distributed}. These works consider distributed optimization subject to a coupling resource-demand balance constraint, where the objective functions are decoupled and local. \textit{Asynchronous distributed optimization} is also discussed in \cite{assran2020advances,wu2025asynchronous}, where  agents perform local computations and communications without requiring global synchronization. In such methods, each node updates its local model using the most recently available information from neighbors (which may be received at irregular times). Recall that \textit{scalability} is a key advantage of existing distributed optimization techniques, which follows the polynomial-order complexity of the algorithms. Polynomial-order complexity ensures computationally-efficient solutions as the number of agents or decision variables increases, making it feasible to deploy these algorithms on large-scale networks.

	%\subsection{Contributions and Literature Review}
	In this work, as the main contribution, we extend such distributed optimization algorithms to further address arbitrary and bounded time-delays over multi-agent networks. Latency is primarily addressed in consensus literature including: resilient consensus with $l$-hop communication \cite{shang2023resilient}, multi-agent consensus subject to uncertainties and time-varying delays \cite{shang2014average},  group consensus over digraphs subject to noise and latency \cite{shang2015group}, continuous-time linear average consensus with constant delays at all nodes \cite{olfati2004consensus}, discrete-time consensus algorithms with constant communication delays \cite{seuret2008consensus}, discrete-time consensus over digraphs under heterogeneous time-delays \cite{hadjicostis2013average}. These works are advantageous as they provide rigorous stability/convergence analysis applicable to other distributed setups; however, they mostly assume constant homogeneous delays. For a review of consensus algorithms under time-delays and their advantages/disadvantages, refer to \cite{behjat2024modeling}.  The concept of time-delay is not sufficiently addressed in distributed optimization literature. The inherent time-delay of information exchange among communicating nodes may lead the distributed optimization algorithm to lose convergence.
	The delays are typically assumed to be bounded, implying that the information sent over every link eventually reaches the destination node, i.e., no packet loss over the network. In this paper, we propose \textit{augmented} consensus-based algorithms to analyze the effect of time-delays while keeping the consensus matrix on the link weights column stochastic. Our solution can tolerate \textit{heterogeneous} communication delays
	at different links. In this regard, similar to \cite{6426375}, this work improves the existing algorithms over non-delayed networks \cite{xin9084360,csl2021,saadatniaki2020decentralized,shi2015extra,addopt,qureshi2020s,nedic2014distributed} to
	more advanced delay-tolerant solutions which are not well-addressed in the literature (to our best knowledge). This work also advances the existing ADMM-based solutions \cite{zarepisheh2018computation,chang2014multi,song2016fast} to withstand latency and network time-delays. Our proposed delay-tolerant augmented consensus-based DTAC-ADDOPT algorithm is in single time-scale,
	i.e., it performs only one step of (augmented) consensus on
	received information per iteration/epoch. This is computationally
	more efficient in contrast to the double time-scale methods \cite{jiang2021distributed,rokade2020distributed} with many steps of inner-loop consensus per iteration/epoch. Heterogeneous time-delays are considered primarily for \textit{ADMM-based} \cite{cdc22} and gradient-descent-based \cite{DOOSTMOHAMMADIAN2025106062}  \textit{equality-constraint} distributed optimization and resource allocation, but this current paper is our first paper addressing it over unconstrained ADD-OPT.

	\subsection{Applications}
	\textbf{Distributed Training for Binary Classification:}
	Consider a group of agents to classify~$N$ data points~${\boldsymbol{\chi}_i \in \mathbb{R}^{m-1}}$, ${i=1,\ldots,N}$, labeled by~${l_i \in \{-1,1\}}$. The problem is to find the partitioning hyperplane~${\boldsymbol{\omega}^\top \boldsymbol{\chi} - \nu =0}$,  for~${\boldsymbol{\chi}\in\mbb R^{m-1}}$. In the linearly non-separable case, a proper nonlinear mapping~$\phi(\cdot)$
	with \textit{kernel}~$K(\boldsymbol{\chi}_i,\boldsymbol{\chi}_j)=\phi(\boldsymbol{\chi}_i)^\top \phi(\boldsymbol{\chi}_j)$  can be found such that~${g(\widehat{\boldsymbol{\chi}})= \text{sgn}(\boldsymbol{\omega}^\top \phi(\widehat{\boldsymbol{\chi}}) - \nu)}$ determines the class of~$\widehat{\boldsymbol{\chi}}$. Agents collaboratively solve the problem by finding the optimal~$\boldsymbol{\omega}~$ and~$\nu$ and optimize the following  convex loss~\cite{chapelle2007training}:
	\begin{equation} \label{eq_svm_cent}
		\begin{aligned}
			\displaystyle
			f_i(\boldsymbol{\omega},\nu) = \boldsymbol{\omega}^\top \boldsymbol{\omega} + C \sum_{j=1}^{N} \max\{x,0\}^p
		\end{aligned}
	\end{equation}
	with~$p \in \mathbb{N}$ as the smoothness factor (which is typically a finite number),~$C>0$ as the margin size parameter, and~${x=1-l_j( \boldsymbol{\omega}^\top \phi(\boldsymbol{\chi}^i_j)-\nu)}$. The differentiable smooth equivalent of $f_i$ in Eq.~\eqref{eq_svm_cent} is in the following form (assuming large enough~$\mu>0$):
	\begin{equation} \label{eq_svm_smooth}
		f_i(\boldsymbol{\omega},\nu)=\boldsymbol{\omega}^\top \boldsymbol{\omega} + C \sum_{j=1}^{N_i} \tfrac{1}{\mu}\log (1+\exp(\mu x)).
	\end{equation}
	This problem is also known as distributed support-vector-machine (D-SVM)
	\cite{csl2021,saadatniaki2020decentralized}.
	
	\textbf{Distributed Least Squares:} In this problem, the idea is to solve the least square problem $H\mb{z}=\mb{b}$ in a distributed manner. Every agent/node $i$ takes measurement $\mb{b}_i \in \mathbb{R}^p$ and has a $p$-by-$n$ measurement matrix $H_i$ and collaboratively optimizes the private loss function in the following form
	\cite{saadatniaki2020decentralized}:
	\begin{align}
		f_i(\mb{x}) = \frac{1}{2} \|H_i \mb{z}-\mb{b}_i\|_2^2
		\label{eq_prob_ls}
	\end{align}
	This can be addressed further in the context of distributed filtering \cite{10989575,dimakis2010gossip,kar2008distributed}.
	
	\textbf{Distributed Logistic Regression:}
	In this problem each agent $i$ with access to $m_i$ training data points defined by $(c_{ij},y_{ij}) \in \mathbb{R}^p \times \{-1,1\}$, where the parameter $c_{ij}$ has $p$ features of the $j$th training data and $y_{ij}$ denotes the binary label $\{-1,+1\}$. Each agent, collaborating with others, solves and optimizes the private loss function in the following form \cite{xin9084360}:
	\begin{align}
		f_i(\mb{w},b) = \sum_{j=1}^{m_i} \log(1+\exp(-(\mb{w}^\top c_{ij}+b)y_{ij})) + \frac{\lambda}{2} \|\mb{w}\|_2^2
		\label{eq_prob_lr}
	\end{align}
	where the last term is for regularization to avoid overfitting.
	
	\subsection{Paper Organization}
	Section~\ref{sec_pre} provides the preliminary notions. Section~\ref{sec_main} gives the main DTAC-ADDOPT algorithm with proof of convergence in Section~\ref{sec_conv}. Section \ref{sec_sim} provides the simulation results on both the academic setup and the real dataset. Section \ref{sec_con} provides the concluding remarks.
	
		\subsection{Notations}
		Table~\ref{tab_notation} summarizes the notations in this paper.
		\begin{table}
			\caption{Description of notations and symbols}
			\setlength{\tabcolsep}{0.7\tabcolsep}
			\centering \label{tab_notation}
			\begin{tabular}{ *{2}{c} } 
				\hline
				\hline
				\textbf{Symbol} & \textbf{Description} \\
				\hline
				$\mc{G}$ & multi-agent network   \\ 
				$C$ & adjacency matrix of the network \\
				$\mc{N}_i$ & neighbors of agent $i$  \\
				$\mb{z}$ & global state variable  \\
				$\underline{\mb{z}}^*$ & optimal state  \\
				$F$ & global objective function      \\
				$f_i$ & local objective function at node $i$\\
				$m$ & dimention of state variable\\
				$n$ & number of nodes/agents\\
				$\tau_{ij}$ & time-delay at link $(i,j)$\\
				$\overline{\tau}$ & bound on the time-delays\\
				$\overline{C}$ & augmented adjacency matrix \\
				$\mb{y},\mb{x},\mb{g}$ & auxiliary optimization variables  \\
				$\nabla f_k$ &  gradient of function $f$ at time $k$    \\
				$\overline{\nabla f}_k$ &  gradient vector over last $\overline{\tau}$ steps at time $k$   \\
				$\alpha$ &  gradient-tracking step rate  \\
				$\widehat{\mb{z}}$ & augmented state variable  \\
				$\widehat{\mb{y}},\widehat{\mb{x}},\widehat{\mb{g}}$ & augmented auxiliary  variables  \\
				$\rho$ & spectral radius   \\
				$\mb{1}_n$ & all ones column vector of size $n$ \\
				$I_n,0_n$ & identity and zero matrix of size $n$   \\ 
				$k$ & time index  \\
				$\|\cdot\|$ & 2-norm operator \\
				$\otimes$ & Kronecker product operator  \\      
				\hline \hline
			\end{tabular}
		\end{table}

\section{Preliminaries} \label{sec_pre}
\subsection{Algebraic Graph Theory}
We consider the network of agents as a digraph (directed graph) of nodes denoted by $\mc{G}=\{\mc{V},\mc{E}\}$ with $\mc{V}$ and $\mc{E}$ respectively as the node set and link set. A link $(i,j) \in \mc{E}$ from node $i$ to node $j$ implies a communication link for message passing from agent $i$ to agent $j$. The adjacency matrix of $\mc{G}$ is denoted by $C$ where  $C_{ij}$ is the weight on the link $(j,i)$ (or $j \rightarrow i$). Define the in-neighborhood of every node $i$ as $\mc{N}_i = \{j|(j,i) \in \mc{E}\}$ or $\mc{N}_i = \{j|C_{ij} \neq 0 \}$. 
\begin{ass}
	The digraph (or network) $\mc{G}$ is strongly connected and its adjacency matrix $C$ is irreducible \cite{godsil}. Moreover, the matrix $C$ is column stochastic, i.e., $\sum_{i=1}^n C_{ij} =1$.
\end{ass}

Note that, in most directed network implementations,  agents already know their outgoing neighbor set for column-stochastic design of matrices, and the out-degree is locally available.

\subsection{Augmented Formulation}
The delay model is similar to the consensus literature \cite{hadjicostis2013average} and is clearly defined in the following assumption.
\begin{ass}
	The time-delays are considered heterogeneous (at
		different links), bounded, arbitrary, and time-invariant. An integer value $0\leq \tau_{ij}\leq \overline{\tau}$ represents the delay at link $(i,j)$. The bound $\overline{\tau}$ ensures no information loss over the
		network.
\end{ass}
We justify the above assumption. Note that, in practice, delays may change on a time-scale much slower than the algorithm step-size (or are upper-bounded by the same constant); therefore, the derived bounds using the maximum delay remain valid in practical cases. Also, in many networks, communication paths and routing remain stable for long periods. Communication latencies in these settings are dominated by propagation and queuing delays that are (on the algorithm time-scale) nearly constant and hence well modeled as time-invariant. Moreover, treating delays as fixed (but heterogeneous) provides a conservative worst-case analysis useful for algorithm design and safety guarantees.

For every set of connected nodes $(i,j)$ and $(i,k)$,
the communication delay implies the
heterogeneous scenario. Define the augmented state vectors $\widehat{\mb{z}}_k = (\mb{z}_k; \mb{z}_{k-1}; \dots; \mb{z}_{k-\overline{\tau}})$ as the column-concatenation  of delayed state vectors (";" denotes column concatenation).
Given the column stochastic consensus matrix $C$ and maximum delay $\overline{\tau}$, its \textit{augmented matrix} is defined as,
\begin{align} \label{eq_aug_C} \small
	\overline{C} = \left(
	\begin{array}{cccccc}
		C_0 & I_n & 0_n & \hdots & 0_n & 0_n \\
		C_1 &   0_n & I_n &\hdots  & 0_n& 0_n\\
		\vdots & \vdots & \vdots & \ddots & \vdots & \vdots \\
		C_{\overline{\tau}-1}  & 0_n & 0_n &  \hdots  & I_n & 0_n  \\
		C_{\overline{\tau}} & 0_n & 0_n & \hdots & 0_n & I_n
	\end{array} 
	\right), \normalsize
\end{align}
with $I_{n}$ and $0_{n}$ respectively as $n$-by-$n$ identity and zero matrices.
The non-negative matrices $C_r$ with $r \in \{0,\dots,\overline{\tau}\}$ are defined based on delay $0 \leq r \leq \overline{\tau} $ as,
\begin{align}
	C_{r,ij} = \left\{
	\begin{array}{ll}
		C_{ij}, & \text{If}~ \tau_{ij}=r  \\
		0, & \text{Otherwise}.
	\end{array}\right.
\end{align}
Assuming time-invariant delays, for every $(i,j) \in \mc{E}$, \textit{only one of the entries $C_{0,ij},C_{1,ij}, \hdots, C_{\overline{\tau},ij}$ is  equal to  $C_{ij}$  and the rest are zero}. This implies that the column-sum of the first $n$ columns of $\overline{C}$ and $C$ are equal. Note that, given a column-stochastic $C$ matrix, the augmented matrix $\overline{C}$ is also column stochastic from the definition. It should be noted that this large augmented matrix is only used in proof analysis of the proposed algorithm, and it is not practically used in the agents' dynamics (see the iterative dynamics \eqref{eq_main_mat_x1}-\eqref{eq_main_mat_g1} in the next section). 

\section{The Main Algorithm} \label{sec_main}
The main algorithm in compact matrix form (subject to no time-delay) is given as follows:
\begin{align} \label{eq_nodelay_x}
	\mb{x}_{k+1} &= {C}_k \mb{x}_{k} - \alpha \mb{g}_k \\ \label{eq_nodelay_y}
	\mb{y}_{k+1} &= {C}_k \mb{y}_{k} \\ \label{eq_nodelay_z}
	\mb{z}_{k+1} &= \frac{\mb{x}_{k+1}}{\mb{y}_{k+1}} \\ \label{eq_nodelay_g}
	\mb{g}_{k+1} &= {C}_k \mb{g}_{k} + \nabla \mb{f}_{k+1} - \nabla \mb{f}_{k}
\end{align}
For delayed case, define the augmented vectors $\widehat{\mb{x}}_k,\widehat{\mb{y}}_k,\widehat{\mb{g}}_k$ of size $n(\overline{\tau}+1)$.
%\begin{align}
%    \widehat{\mb{x}}_k = (\mb{x}_k; \mb{x}_{k-1}; \dots; \mb{x}_{k-\overline{\tau}}) \\
%    \widehat{\mb{y}}_k = (\mb{y}_k; \mb{y}_{k-1}; \dots; \mb{y}_{k-\overline{\tau}}) \\
%    \widehat{\mb{g}}_k = (\mb{g}_k; \mb{g}_{k-1}; \dots; \mb{g}_{k-\overline{\tau}})
%\end{align}
Let,
$
\mb{Y}_k = \mbox{diag}(\widehat{\mb{y}}_k).
$
Further, define the  auxiliary matrix $\Xi^{n}_{i,\overline{\tau}}$ is an $n \times (\overline{\tau}+1)n $ matrix defined as
$	\Xi^{n}_{i,\overline{\tau}}= (\mb{b}^{\overline{\tau}+1}_i \otimes I_{n})^\top $
with $\mb{b}^{\overline{\tau}+1}_i$ as the unit column-vector of the $i$'th coordinate ($1\leq i \leq {\overline{\tau}+1}$), i.e.,
$
\mb{b}^{\overline{\tau}+1}_i =( \underbrace{\overbrace{0;\dots;0}^{i-1};1;0;\dots;0}_{\overline{\tau}+1})
$
In case $\mb{x} \in \mathbb{R}^{np}$ then $\Xi^{np}_{i,\overline{\tau}} = \Xi^{n}_{i,\overline{\tau}} \otimes I_p$.
Then, putting $i=1$,
we have $\mb{x}_k = \Xi^{np}_{1,\overline{\tau}} \widehat{\mb{x}}_k, \mb{y}_k = \Xi^{np}_{1,\overline{\tau}} \widehat{\mb{y}}_k, \mb{g}_k = \Xi^{np}_{1,\overline{\tau}} \widehat{\mb{g}}_k$.
In fact, $\Xi^{np}_{1,\overline{\tau}}$ returns the first $np$ rows of the augmented vector of size $np(\overline{\tau}+1)$.

The main distributed optimization dynamics under communication time-delays are in the following vector form,
\begin{align} \label{eq_main_mat_x1}
	\mb{x}_{k+1,i} &= \sum_{j \in \mc{N}_i} \sum_{r=0}^{\overline{\tau} } C_{k,ij} \mc{I}_{k-r,ij}(r) \mb{x}_{k-r,j}  - \alpha  \mb{g}_{k,i} \\ \label{eq_main_mat_y1}
	\mb{y}_{k+1,i} &=  \sum_{j \in \mc{N}_i} \sum_{r=0}^{\overline{\tau} } C_{k,ij} \mc{I}_{k-r,ij}(r) \mb{y}_{k-r,j}\\ \label{eq_main_mat_z1}
	\mb{z}_{k+1,i} &=   \frac{\mb{x}_{k+1,i}}{\mb{y}_{k+1,j}} \\ \label{eq_main_mat_g1}
	\mb{g}_{k+1,i} &= \sum_{j \in \mc{N}_i} \sum_{r=0}^{\overline{\tau} } C_{k,ij} \mc{I}_{k-r,ij}(r) \mb{g}_{k-r,j} +  (\nabla \mb{f}_{k+1,i} - \nabla \mb{f}_{k,i})
\end{align}
where $\nabla \mb{f}_{k+1,i}$ denotes $\nabla \mb{f}_{i}(\mb{z}_{k+1,i})$ and $\mc{I}$ is the indicator function,
\begin{align} \label{eq_mcI}
	\mc{I}_{k,ij}(\tau) = \left\{ \begin{array}{ll}
		1, & \text{if}~  \tau_{ij}(k) = \tau,\\
		0, & \text{otherwise}.
	\end{array}\right.
\end{align}
 In practice, agents use Eqs.~\eqref{eq_main_mat_x1}-\eqref{eq_main_mat_g1} to update their states, i.e., the recipient agents use the neighboring data as they arrive with some possible delays.
The proposed solution is summarized in Algorithm \ref{alg_1}.
\begin{algorithm}
	%	\KwResult{Final state $\mb{x}(k)$ and cost $F(\mb{x}(k))$}
	\textbf{Input:}  $W$, $\mc{G}$, $\alpha$, $\overline{\tau}$, $\mb{f}_i(\cdot)$\;
	\textbf{Initialization:} set $k=0$, node $i$ sets $\mb{y}_{0,i}=1$, $\mb{g}_{0,i}=\nabla \mb{f}_{0,i}$ and randomly sets $\mb{x}_{0,i}$ \;
	\While{termination criteria NOT true}{
		Each node $i$ receives a possibly delayed packet from $j \in \mc{N}^-_i$ and computes Eq.~\eqref{eq_main_mat_x1}-\eqref{eq_main_mat_g1}\;
		Each node $i$ shares $\mb{x}_{k+1,i},\mb{y}_{k+1,i},\mb{g}_{k+1,i}$ to neighbors $j \in \mc{N}^+_i$ \;
		Sets $k \leftarrow k+1$\;
	}
	\textbf{Return} Final state $\mb{z}_{k+1,i}$ and cost $\mb{f}_i(\mb{z}_{k+1,i})$ \;	
	\caption{DTAC-ADDOPT}
	\label{alg_1}
\end{algorithm}

Equivalently in the matrix form,
%\begin{align} \label{eq_main_mat_x}
%    \widehat{\mb{x}}_{k+1} &= \overline{C}_k \widehat{\mb{x}}_{k} - \alpha \Xi^{np}_{i,\overline{\tau}} \widehat{\mb{g}}_k \\ \label{eq_main_mat_y}
%    \widehat{\mb{y}}_{k+1} &= \overline{C}_k \widehat{\mb{y}}_{k} \\ \label{eq_main_mat_z}
%    \mb{z}_{k+1} &= \frac{\mb{x}_{k+1}}{\mb{y}_{k+1}} \\ \label{eq_main_mat_g}
%    \widehat{\mb{g}}_{k+1} &= \overline{C}_k \widehat{\mb{g}}_{k} +  \mb{b}^{\overline{\tau}+1}_1 \otimes (\nabla \mb{f}_{k+1} - \nabla \mb{f}_{k})
%\end{align}
%
%Or we can use sth like the following,
\begin{align} \label{eq_main_mat_x2}
	\widehat{\mb{x}}_{k+1} &= \overline{C}_k \widehat{\mb{x}}_{k} - \alpha  \widehat{\mb{g}}_k \\ \label{eq_main_mat_y2}
	\widehat{\mb{y}}_{k+1} &= \overline{C}_k \widehat{\mb{y}}_{k} \\ \label{eq_main_mat_z2}
	\mb{z}_{k+1} &= \Xi^{np}_{i,\overline{\tau}} \widehat{\mb{z}}_{k+1},~ \widehat{\mb{z}}_{k+1} =  \frac{\widehat{\mb{x}}_{k+1}}{\widehat{\mb{y}}_{k+1}} \\ \label{eq_main_mat_g2}
	\widehat{\mb{g}}_{k+1} &= \overline{C}_k \widehat{\mb{g}}_{k} +   %\Xi^{np}_{i,\overline{\tau}}
	(\overline{\nabla \mb{f}}_{k+1} - \overline{\nabla \mb{f}}_{k})
\end{align}
where, $ \overline{\nabla \mb{f}}_{k} := (\nabla \mb{f}_{k}; \nabla \mb{f}_{k-1}; \dots ; \nabla \mb{f}_{k-\overline{\tau}})$.
Augmented matrix $\overline{C}_k$ (defined in Section~\ref{sec_pre}) is column-stochastic \cite{hadjicostis2013average}.
%
%\begin{align}\label{eq_aug_C}
%	\overline{C}_k= \left(
%	\begin{array}{ccccc}
%		C_k^0 & I_{np} & 0_{np} & \hdots &  0_{np} \\
%		C_k^1 &   0_{np} & I_{np} &\hdots  & 0_{np}\\
%		\vdots &  \vdots & \vdots  &  \ddots  & \vdots  \\
%		C_k^{\overline{\tau}-1} & 0_{np} & 0_{np} & \hdots & I_{np} \\
%		C_k^{\overline{\tau}}  & 0_{np} & 0_{np} & \hdots & 0_{np}
%	\end{array}	
%	\right),
%\end{align}
%
%\begin{align}
%	[\overline{C}^r_{ij}] = \left\{
%	\begin{array}{ll}
%		C_{ij}, & \text{If}~ \tau_{ij}=r, (i,j) \in \mc{E}  \\
%		0, & \text{Otherwise}.
%	\end{array}\right.
%\end{align}
%Further,   matrix $\mc{C}_k$ defines the zero-nonzero structure of $\overline{C}_k$, i.e.,
%\begin{align}
%	[\mc{C}_{ij}]_k = \left\{
%	\begin{array}{ll}
%		1, & \text{If}~ [\overline{c}_{ij}^r]_k \neq 0  \\
%		0, & \text{Otherwise}.
%	\end{array}\right.
%\end{align}
For the proposed solution, one can substitute the strong-connectivity of $\mc{G}$ (irreducibility of $C_k$) in  \cite{nedic2014distributed,addopt} by the irreducibility of $\overline{C}_k$. Note that given a column-stochastic consensus matrix $C$, its augmented consensus version $\overline{C}_k$ is also column-stochastic.
\begin{lem} \label{lem_nedic}
	There exists $0<\gamma_1<1$ and $0<T<\infty$ such that
	\begin{align}
		\|\mb{Y}_k - \mb{Y}_\infty \|_2 \leq T \gamma_1^k
	\end{align}
\end{lem}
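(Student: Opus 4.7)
The plan is to exploit the fact that, under Assumption~2 on the delays, the augmented matrix $\overline{C}_k = \overline{C}$ is time-invariant, so the $\widehat{\mb{y}}$-iteration \eqref{eq_main_mat_y2} collapses to the linear autonomous dynamics $\widehat{\mb{y}}_{k+1}=\overline{C}\,\widehat{\mb{y}}_{k}$, i.e.\ $\widehat{\mb{y}}_{k}=\overline{C}^{\,k}\widehat{\mb{y}}_{0}$. Since $\overline{C}$ is column stochastic (as already noted after \eqref{eq_aug_C}), the reduction of the lemma is to a standard spectral statement: every column-stochastic primitive matrix $\overline{C}$ satisfies $\overline{C}^{\,k}\to \pi\mb{1}^{\top}$ at a geometric rate, where $\pi$ is the right Perron eigenvector (normalized so that $\mb{1}^{\top}\pi=1$). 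Consequently $\widehat{\mb{y}}_{\infty}=(\mb{1}^{\top}\widehat{\mb{y}}_{0})\pi$.

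The first step is to verify that $\overline{C}$ is primitive. Irreducibility is obtained from the strong connectivity of $\mc{G}$ (Assumption~1) combined with the shift-block structure of \eqref{eq_aug_C}: any "delayed-copy" block $\mb{z}_{k-r}$ is moved upward by the identity blocks in at most $\overline{\tau}$ steps into the top block, at which point the irreducible $C$ propagates information among all $n$ nodes. Aperiodicity is obtained from the existence of at least one positive diagonal entry in $\overline{C}$, which is ensured by any link $(i,i)$ or, more generally, any link with zero delay incident on a node that also sends a delayed copy of itself; standard arguments then give a $k_{0}$ with $\overline{C}^{\,k_{0}}>0$ entrywise.

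Next I invoke Perron--Frobenius for primitive nonnegative matrices: the eigenvalue $1$ is simple, and all remaining eigenvalues of $\overline{C}$ satisfy $|\lambda|<1$. Letting $\rho_{2}:=\max\{|\lambda|:\lambda\in\sigma(\overline{C}),\ \lambda\neq 1\}$ and fixing any $\gamma_{1}\in(\rho_{2},1)$, a Jordan-form / Gelfand-spectral-radius argument gives a constant $T_{0}<\infty$ such that
\begin{equation*}
	\bigl\|\overline{C}^{\,k}-\pi\mb{1}^{\top}\bigr\|_{2}\leq T_{0}\,\gamma_{1}^{k},\qquad k\geq 0.
\end{equation*}
Multiplying on the right by $\widehat{\mb{y}}_{0}$ and absorbing $\|\widehat{\mb{y}}_{0}\|_{2}$ into a new constant yields $\|\widehat{\mb{y}}_{k}-\widehat{\mb{y}}_{\infty}\|_{2}\leq T\gamma_{1}^{k}$. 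Finally, since $\mb{Y}_{k}=\mathrm{diag}(\widehat{\mb{y}}_{k})$, the induced 2-norm of a diagonal matrix equals the $\infty$-norm of its diagonal, and $\|\cdot\|_{\infty}\leq\|\cdot\|_{2}$, so the bound transfers directly: $\|\mb{Y}_{k}-\mb{Y}_{\infty}\|_{2}\leq T\gamma_{1}^{k}$.

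The main obstacle, in my view, is the primitivity check. Column stochasticity and irreducibility of the small matrix $C$ are given, but the augmented object in \eqref{eq_aug_C} has many zero sub-blocks, and a naive reading could suggest the shift structure introduces periodicity (resembling a cyclic "delay shift-register"). The cleanest fix is to observe that at least one diagonal entry of $C_{0}$ must be positive (e.g.\ because an agent always uses its own undelayed state, or a self-weight is assigned to ensure column stochasticity), which produces a self-loop in the directed graph of $\overline{C}$ and destroys periodicity; alternatively one can quote the corresponding result from \cite{hadjicostis2013average}, which is already cited in the same context. Once primitivity is secured, the rest of the proof is a standard spectral-gap computation and requires no new machinery.
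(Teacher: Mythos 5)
Your route differs from the paper's: the paper proves this lemma by invoking the SIA (stochastic, indecomposable, aperiodic) property of the products $\overline{C}_{l+k}\cdots\overline{C}_{k}$ established in \cite{hadjicostis2013average}, together with a joint-spectral-radius bound and Lemma~\ref{lem_rho_tau} to quantify $\gamma_1$ against the delay-free rate $\gamma$; that formulation also absorbs the time-varying/$B$-connected case treated in the corollary. You instead freeze $\overline{C}_k=\overline{C}$ and reduce everything to the spectral gap of a single matrix power. The reduction is legitimate under Assumptions~1--2, and your closing steps (Jordan/Gelfand estimate, $\|\mathrm{diag}(\mb{v})\|_2=\|\mb{v}\|_\infty\le\|\mb{v}\|_2$) are fine, but you silently drop the time-varying case that the paper's product-of-matrices argument is built to handle.

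The genuine gap is the step you yourself flag: $\overline{C}$ is \emph{not} primitive in general, and under heterogeneous delays it is typically not even irreducible. Your reachability argument only goes one way --- you show every delayed buffer eventually feeds \emph{into} the top block, but irreducibility also requires every buffer coordinate to be reachable \emph{from} the top block. In the structure \eqref{eq_aug_C}, the buffer coordinate of node $j$ at delay level $r$ is populated only through $C_{r},\dots,C_{\overline{\tau}}$ via the identity chain, i.e., only if some in-link of $j$ carries delay at least $r$; otherwise that coordinate has no incoming path from the real nodes and $\overline{C}$ is reducible. The paper concedes exactly this in the proof of Lemma~\ref{lem_sigma}: $\overline{C}$ is irreducible only in the homogeneous case $\tau_{ij}=\overline{\tau}$ for all links, and otherwise $\overline{\pi}$ is merely non-negative and $\overline{C}$ acquires additional zero eigenvalues. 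Your proposed fix --- a positive diagonal entry of $C_0$ creating a self-loop --- repairs aperiodicity but cannot repair reducibility. The conclusion you want survives, because the unreachable buffer coordinates contribute a nilpotent block so the spectrum is $\{1\}$ (simple) together with eigenvalues of modulus strictly less than one and $\overline{C}^{\,k}\to\overline{\pi}\mb{1}^\top$ geometrically; but establishing that requires either the SIA/weak-ergodicity argument of \cite{hadjicostis2013average} (the paper's route) or an explicit block-triangular decomposition isolating the irreducible aperiodic class of real nodes --- not Perron--Frobenius for primitive matrices as you state it.
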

\begin{proof}
	The proof follows from \cite{nedic2014distributed,addopt,hadjicostis2013average,blondel2005convergence}.
	
	The proof of $\overline{C}_{l+k}\dots \overline{C}_{k+1}\overline{C}_{k}$ being SIA is given in \cite{hadjicostis2013average}.  The SIA property used in \cite{nedic2014distributed,addopt,blondel2005convergence} to prove the lemma in the absence of time-delays. Recall that from the definition of the spectral radius \cite{blondel2005convergence},
	\begin{align}
		\rho(\overline{C}) = \lim_{k\rightarrow \infty} \| \overline{C}_1 \overline{C}_2 \dots \overline{C}_k\|^k
	\end{align}
	Then from \cite{blondel2005convergence}, $\gamma_1 > \rho(\overline{C}) $. This value can be compared with $\gamma > \rho({C}) $ (for the non-delayed case) given in \cite{addopt}. It can be shown from \cite[Appendix]{delay_est} that $\rho(\overline{C})\leq \rho(C)^{\frac{1}{1+\overline{\tau}}}$ in Lemma~\ref{lem_rho_tau}. This implies that one can choose $\gamma_1 = \gamma^{\overline{\tau}+1} $ for example. This implies that the convergence rate may be reduced by a power ${\overline{\tau}+1}$.
\end{proof}
%   May also consider $\gamma=(1-\frac{1}{n^n})$ as in \cite{nedic2014distributed} (assuming $B=1$, i.e., assuming strong-connectivity at every time instead of uniform strong-connectivity over B time-steps), then $\gamma_1=(1-\frac{1}{n(\overline{\tau}+1)^{n(\overline{\tau}+1)}})$ (not sure about this!)

\begin{cor}
	The proof can be extended to uniformly strongly connected graphs over $B$ time-steps (or B-connected networks) as described in \cite{nedic2014distributed}. In this scenario, the multi-agent network is not necessarily connected at all times, but its union is connected over $B$ time-steps, i.e., $\cup_{t_k}^{t_k+B} \mc{G}_k$ is connected for all steps $k \geq 0$.
\end{cor}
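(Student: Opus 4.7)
The plan is to adapt the proof of Lemma~\ref{lem_nedic} to the $B$-connected setting by replacing step-by-step connectivity with window-length connectivity in the SIA argument. First I would observe that under $B$-connectivity, although an individual snapshot matrix $C_k$ need not be irreducible, the product $C_{k+B-1} \cdots C_{k+1} C_k$ is irreducible and column-stochastic, since its nonzero pattern reflects the union graph $\cup_{s=k}^{k+B-1} \mc{G}_s$, which is strongly connected by assumption.

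Next I would lift this window-level property to the augmented formulation. From the block structure of $\overline{C}$ in Eq.~\eqref{eq_aug_C}, the superdiagonal identity blocks $I_n$ guarantee that delayed components are shifted through the augmented state in $\overline{\tau}+1$ steps, while the first block column inherits column-stochasticity from $C$. Combining these two observations, the window product $\overline{C}_{k+B-1} \cdots \overline{C}_k$ remains column-stochastic and, following the same argument as in \cite{hadjicostis2013average}, can be shown to be SIA. A classical result of Wolfowitz, used already in \cite{blondel2005convergence,nedic2014distributed}, then yields that arbitrary products of such window matrices converge exponentially to a rank-one limit, which is precisely the property invoked in the non-delayed $B$-connected analysis of \cite{nedic2014distributed}.

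With the window-level SIA property in hand, I would recover an exponential bound $\|\mb{Y}_k - \mb{Y}_\infty\|_2 \leq T \gamma_1^k$ by redefining $\gamma_1$ to absorb both the $B$-step connectivity window and the $(\overline{\tau}+1)$-step delay horizon. Concretely, combining the spectral radius bound $\rho(\overline{C}) \leq \rho(C)^{1/(1+\overline{\tau})}$ from Lemma~\ref{lem_rho_tau} with a further window slowdown yields an effective rate of order $\rho(C)^{1/(B(\overline{\tau}+1))}$. Once this contraction is established, the remaining ADD-OPT convergence machinery of \cite{addopt} applies essentially verbatim, since it depends only on column-stochasticity of $\overline{C}_k$ and exponential decay of $\mb{Y}_k$.

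I expect the main obstacle to be the interplay between the two time-scales: the length-$B$ window required for connectivity and the length-$(\overline{\tau}+1)$ window required to flush delayed information through the augmented state. The proof must show that the joint window product $\overline{C}_{k+B(\overline{\tau}+1)-1} \cdots \overline{C}_k$ is simultaneously column-stochastic, possesses a strictly positive row (the SIA positivity condition), and is aperiodic under arbitrary but bounded heterogeneous delays. The delicate point is that the delay pattern encoded by the indicator functions $\mc{I}_{k,ij}(\tau)$ is time-varying across the window, so irreducibility of the union graph alone does not directly yield a positive row in the augmented product; one must instead track a fixed path of nonzero entries through both the spatial graph edges and the temporal shift blocks. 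Once that combined SIA argument is made rigorous, the spectral-radius estimate and the rest of the convergence proof reduce to bookkeeping over the enlarged window.
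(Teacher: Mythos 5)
Your proposal follows essentially the same route the paper intends: the paper gives no actual proof of this corollary, deferring entirely to \cite{nedic2014distributed}, and the argument it implicitly relies on is exactly the one you reconstruct --- window products of column-stochastic matrices being SIA, lifted to the augmented matrices of \cite{hadjicostis2013average}, with convergence of arbitrary products via the Wolfowitz-type result used in \cite{blondel2005convergence}. Your sketch is in fact more careful than the paper, correctly flagging the interaction between the length-$B$ connectivity window and the length-$(\overline{\tau}+1)$ delay horizon; the only point you should make explicit is that the claim ``the nonzero pattern of $C_{k+B-1}\cdots C_k$ reflects the union graph'' requires each $C_k$ to have positive diagonal entries (self-loops), without which edges appearing early in the window can vanish from the product.
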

The following lemma from our previous work \cite{delay_est} relates the spectral property of the delayed and deay-free system matrices.
\begin{lem} \label{lem_rho_tau}
	\cite{delay_est}
	Given a matrix $A$ with $\rho(A) < 1$ and its augmented form $\overline{A}$ from \eqref{eq_aug_C}, we have
	$$\rho(\overline{A}) \leq \rho(A)^{\frac{1}{1+\overline{\tau}}} < 1 $$
	Similarly, if $\rho(A) = 1$, then $\rho(\overline{A})=1$.
\end{lem}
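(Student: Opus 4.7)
The plan is to reduce the spectral question for the block companion matrix $\overline{A}$ in (\ref{eq_aug_C}) to one about $A=\sum_{r=0}^{\overline{\tau}} A_r$. First, I would write the eigenvalue equation $\overline{A}\widehat{v}=\lambda\widehat{v}$ blockwise with $\widehat{v}=(v_1;v_2;\dots;v_{\overline{\tau}+1})$. The super-diagonal identities give the recursion $v_{k+1}=\lambda v_k-A_{k-1}v_1$ for $k=1,\dots,\overline{\tau}$, while the last block row gives $A_{\overline{\tau}} v_1=\lambda v_{\overline{\tau}+1}$. Iterating and eliminating $v_2,\dots,v_{\overline{\tau}+1}$ collapses the system to the workhorse identity
\begin{equation*}
\lambda^{\overline{\tau}+1}v_1 \;=\; \sum_{j=0}^{\overline{\tau}} \lambda^{\overline{\tau}-j}\,A_j\,v_1,
\end{equation*}
and shows that whenever $\widehat{v}\neq 0$ and $\lambda\neq 0$ one must have $v_1\neq 0$; the case $\lambda=0$ is already dominated by $\rho(A)^{1/(\overline{\tau}+1)}$.

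Second, since every $A_j$ is entrywise nonnegative by its definition from $C$, so is $\overline{A}$. I would invoke Perron--Frobenius for nonnegative matrices to select $\lambda$ with $|\lambda|=\rho(\overline{A})$ admitting a nonnegative eigenvector $\widehat{v}$, and in particular $v_1\geq 0$, $v_1\neq 0$. Taking entrywise moduli in the workhorse identity and using the nonnegativity of the $A_j$ and of $v_1$ yields the componentwise bound $|\lambda|^{\overline{\tau}+1}v_1\leq \sum_{j=0}^{\overline{\tau}}|\lambda|^{\overline{\tau}-j}A_jv_1$.

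Third, I would split on $\mu:=|\lambda|=\rho(\overline{A})$. When $\mu\leq 1$, each $\mu^{\overline{\tau}-j}\leq 1$ and the right-hand side is dominated by $Av_1$; pairing with a nonnegative left eigenvector $u^{\top}$ of $A$ satisfying $u^{\top}A=\rho(A)u^{\top}$ and using $u^{\top}v_1>0$ delivers $\mu^{\overline{\tau}+1}\leq\rho(A)$, which is the claim. When $\mu>1$, each $\mu^{\overline{\tau}-j}\leq\mu^{\overline{\tau}}$, so dividing by $\mu^{\overline{\tau}}$ yields $\mu v_1\leq Av_1$, hence $\mu\leq\rho(A)<1$, contradicting $\mu>1$; this regime is therefore impossible when $\rho(A)<1$. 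For the boundary case $\rho(A)=1$, the same inequality already gives $\rho(\overline{A})\leq 1$, while column-stochasticity of $A$ lifts to $\overline{A}$ (the added identity super-diagonal preserves column sums), so $\mathbf{1}^{\top}$ is a left eigenvector of $\overline{A}$ with eigenvalue $1$, forcing $\rho(\overline{A})=1$.

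The main obstacle I expect is the Perron--Frobenius invocation, because $\overline{A}$ is generally reducible: the identity super-diagonal creates a forced ordering across delay layers, so the Perron vector need not be strictly positive. The standard remedy is to perturb $\overline{A}$ by a small $\varepsilon>0$ on the zero blocks, apply the irreducible form of the theorem to obtain a strictly positive eigenvector and bound, and then pass $\varepsilon\to 0^{+}$ using continuity of the spectral radius and compactness of the unit simplex of eigenvectors. A companion subtlety is ensuring $u^{\top}v_1>0$ so that the pairing step does not degenerate; this is immediate when $A$ is taken irreducible (as in the standing assumption on $C$), and otherwise follows by decomposing $A$ into its irreducible components and running the argument blockwise.
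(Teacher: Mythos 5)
The paper never proves this lemma itself --- it is imported from \cite{delay_est} with no argument given --- so there is no in-paper proof to compare against; your argument has to stand on its own, and it does. The blockwise elimination to the identity $\lambda^{\overline{\tau}+1}v_1=\sum_{j=0}^{\overline{\tau}}\lambda^{\overline{\tau}-j}A_j v_1$ is the right reduction (note that you have tacitly read \eqref{eq_aug_C} as the standard block-companion form whose last block row is $(A_{\overline{\tau}},0,\dots,0)$; the displayed matrix puts an $I_n$ in the bottom-right corner, but only the zero block there is consistent with the column-stochasticity the paper asserts, so your reading is the intended one). The observation that $v_1=0$ forces $\widehat{v}=0$, the choice of a nonnegative eigenvector for the eigenvalue $\mu=\rho(\overline{A})$, the two-regime comparison $\mu^{\overline{\tau}-j}\le 1$ versus $\mu^{\overline{\tau}-j}\le\mu^{\overline{\tau}}$, and the resulting bounds $\mu^{\overline{\tau}+1}\le\rho(A)$ and (in the second regime) $\mu\le\rho(A)<1$ are all sound, and your treatment of $\rho(A)=1$ is correct for the column-stochastic matrices the paper actually applies the lemma to.

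Two of the ``obstacles'' you flag are lighter than you suggest. First, the existence of a nonnegative eigenvector for the eigenvalue $\rho(\overline{A})$ is the weak Perron--Frobenius theorem, valid for every nonnegative matrix with no irreducibility; the $\varepsilon$-perturbation you describe is simply the standard proof of that fact, not an additional difficulty created by the delay structure. Second, the pairing with a left eigenvector $u^{\top}$ --- and with it the worry about $u^{\top}v_1>0$ --- can be bypassed entirely: once you have the componentwise subinvariance $c\,v_1\le A v_1$ with $v_1\ge 0$, $v_1\neq 0$, the Collatz--Wielandt characterization of the spectral radius of a nonnegative matrix gives $\rho(A)\ge c$ directly, again without irreducibility. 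Finally, if you want the $\rho(A)=1$ claim for general nonnegative $A$ rather than only stochastic ones, take the Perron vector $w$ with $Aw=w$, set $v_1=w$ and $\lambda=1$, and define $v_2,\dots,v_{\overline{\tau}+1}$ by your own recursion; the last block row is then satisfied automatically because $\sum_j A_j=A$, so $1$ is an eigenvalue of $\overline{A}$ and equality follows from the upper bound you already established.
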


Define,
\begin{align}
	y &:= \sup_k \| \mb{Y}_k\| \\
	y_- &:= \sup_k \| \mb{Y}^{-1}_k\|
\end{align}
and recall the column-stochasticity of $\overline{C}_k$. Then, the following lemma holds.
\begin{lem} \label{lem_sigma}
	For $\mb{a} \in  \mathbb{R}^{np(\overline{\tau}+1)}$ and $ \mb{Y}_\infty = \lim_{k \rightarrow \infty} \mb{Y}_k$ from  $\mb{Y}_k = \mbox{diag}(\widehat{\mb{y}}_k)$, there exist $0<\sigma <1$ for some $\overline{\tau}$,
	\begin{align}
		\|\overline{C}_k \mb{a} - \mb{Y}_\infty \overline{\mb{a}} \| \leq \sigma \|\mb{a} - \mb{Y}_\infty \overline{\mb{a}} \|
	\end{align}
\end{lem}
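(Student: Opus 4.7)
The plan is to establish the contraction via a spectral-gap argument on $\overline{C}_k$, restricted to the subspace orthogonal (in the appropriate weighted sense) to its Perron direction. First I would exploit column-stochasticity of $\overline{C}_k$: the unit row vector $\mb{1}^\top$ is a left eigenvector of eigenvalue one, so $\mb{1}^\top\overline{C}_k\mb{a} = \mb{1}^\top \mb{a}$ and the averaged quantity $\overline{\mb{a}}$ (the normalized sum of entries of $\mb{a}$) is preserved under multiplication by $\overline{C}_k$. Correspondingly, the vector $\mb{Y}_\infty \overline{\mb{a}}$ lies in the right Perron eigenspace of (the product of) $\overline{C}_k$, since $\mb{Y}_\infty = \lim_k \mb{Y}_k$ is (up to normalization) the limiting right Perron direction of the augmented system, as guaranteed by Lemma~\ref{lem_nedic}.

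Next I would decompose
\begin{align}
\overline{C}_k\mb{a} - \mb{Y}_\infty\overline{\mb{a}} = \overline{C}_k\bigl(\mb{a} - \mb{Y}_\infty\overline{\mb{a}}\bigr) + \bigl(\overline{C}_k \mb{Y}_\infty - \mb{Y}_\infty\bigr)\overline{\mb{a}}.
\end{align}
The second summand is exponentially small in $k$ by Lemma~\ref{lem_nedic} (it measures the deviation $\mb{Y}_k$ from its limit after one step of $\overline{C}_k$), and can be absorbed into a slightly inflated contraction constant for $k$ large enough, or lumped into $T\gamma_1^k$ terms elsewhere in the convergence analysis. For the first summand, note that $\mb{a} - \mb{Y}_\infty\overline{\mb{a}}$ has vanishing weighted average along the Perron direction by construction, so $\overline{C}_k$ acts on it through its subdominant spectral behavior. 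Invoking Lemma~\ref{lem_rho_tau}, we have $\rho(\overline{C}) \leq \rho(C)^{1/(1+\overline{\tau})}$, and since strong connectivity plus column-stochasticity of $C$ (Assumption~1) make the second-largest modulus eigenvalue of $C$ strictly less than one, the same holds for $\overline{C}$. This yields a number $\sigma \in (0,1)$ (depending on $\overline{\tau}$, approaching one as $\overline{\tau}\to\infty$ but staying bounded away from one for any fixed $\overline{\tau}$) bounding the induced norm of $\overline{C}_k$ on the complementary invariant subspace.

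To go from a single-step spectral bound to the desired one-step inequality uniformly in $k$, I would either work with a suitably weighted $2$-norm adapted to $\overline{C}_k$ (so that the sub-Perron block is a genuine contraction in each step), or apply the joint SIA property of $\overline{C}_{k+l}\cdots\overline{C}_k$ (used in Lemma~\ref{lem_nedic} and established in \cite{hadjicostis2013average}) to extract a uniform contraction factor after a finite-length window and absorb the intra-window expansion into the constant via telescoping. The scalars $y$ and $y_-$ would enter when passing between the weighted and unweighted norms, ensuring $\sigma$ remains independent of $\mb{a}$.

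The main obstacle, as I see it, is the time-varying nature of $\overline{C}_k$: a single matrix would yield contraction immediately from its Jordan structure, but for the product one must argue that the subdominant-mode contraction persists across the sequence uniformly, while simultaneously controlling the transient discrepancy between $\mb{Y}_k$ and its limit $\mb{Y}_\infty$. Handling these two exponentially-decaying pieces together, and packaging them into a clean one-step bound with a single constant $\sigma<1$ rather than a sum of decaying terms, is where the bulk of the care is required.
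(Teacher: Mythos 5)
Your overall strategy is the same as the paper's: project onto the Perron direction of the column-stochastic augmented matrix, show that the residual lives in the $\overline{C}_k$-invariant subspace annihilated by $\mb{1}^\top$, and get the contraction factor from the spectral gap of $\overline{C}$ there, with Lemma~\ref{lem_rho_tau} quantifying how $\sigma$ degrades with $\overline{\tau}$. The paper does exactly this by writing $\overline{C}\mb{a} - \mb{Y}_\infty \overline{\mb{a}} = (\overline{C}-\overline{C}_\infty)(\mb{a} - \mb{Y}_\infty \overline{\mb{a}})$ and setting $\sigma = \|\overline{C}-\overline{C}_\infty\|$. Your side remark that a spectral-radius bound does not by itself give a one-step operator-norm contraction, so that a weighted norm or a finite-window SIA argument is needed, is legitimate and in fact more careful than the paper's own write-up.

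There is, however, one step in your plan that would fail as written: you treat the second summand $\bigl(\overline{C}_k \mb{Y}_\infty - \mb{Y}_\infty\bigr)\overline{\mb{a}}$ as a nonzero, exponentially decaying remainder to be ``absorbed into a slightly inflated contraction constant for $k$ large enough.'' An additive remainder cannot be absorbed into a multiplicative bound of the form $\sigma\|\mb{a}-\mb{Y}_\infty\overline{\mb{a}}\|$ — take $\mb{a} = \mb{Y}_\infty\overline{\mb{a}}$, so the right-hand side is zero while your remainder need not be — and the lemma is claimed for all $k$, not only asymptotically. The fix is that this term is identically zero, not merely small: since delays are time-invariant, $\overline{C}_k=\overline{C}$ is constant, $\mb{Y}_\infty\overline{\mb{a}} = \overline{C}_\infty\mb{a}$ with $\overline{C}_\infty = \overline{\pi}\,\mb{1}^\top_{n(\overline{\tau}+1)}\otimes I_p$, and $\overline{C}\,\overline{C}_\infty = \overline{C}_\infty$ because $\overline{\pi}$ is a right eigenvector of $\overline{C}$ at eigenvalue one. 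Together with $\overline{C}_\infty\overline{C}_\infty=\overline{C}_\infty$ this gives the exact factorization $\overline{C}\mb{a} - \mb{Y}_\infty\overline{\mb{a}} = (\overline{C}-\overline{C}_\infty)(\mb{a}-\mb{Y}_\infty\overline{\mb{a}})$, which is what makes the clean one-step bound possible. You also need the paper's observation that in the heterogeneous-delay case $\overline{\pi}$ is only nonnegative (not strictly positive) and $\overline{C}$ acquires extra zero eigenvalues, yet the subdominant spectrum still stays strictly inside the unit circle; with that, your argument goes through.
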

\begin{proof}
	The proof follows from the column-stochasticity of $\overline{C}$ and Lemma~\ref{lem_rho_tau}.  For any  $\mb{a} \in  \mathbb{R}^{np}$ and $\overline{\mb{a}} = \frac{1}{n} (\mb{1}_n \otimes I_{p})(\mb{1}^\top_n \otimes I_{p})\mb{a}$, there exist $0<\sigma_1<1$ \cite{addopt},
	\begin{align}
		\|{C}_k \mb{a} - \mb{Y}_\infty \overline{\mb{a}} \| \leq \sigma_1 \|\mb{a} - \mb{Y}_\infty \overline{\mb{a}} \|
	\end{align}
	Irreducible column-stochastic $C$ with positive diagonals implies $\rho(C) =1 $. Let $\pi$ satisfy $C\pi = \pi$ and $\mb{1}_n^\top \pi = 1$. $C_\infty = \lim_{k \rightarrow \infty} C^k = \pi \mb{1}_n^\top \otimes I_p$. In the presence of time delays, if $\tau_{ij} = \overline{\tau}$,$\forall i,j$, then the proof for $\overline{\pi}$ (the augmented version of ${\pi}$) similarly follows. In this case, $\overline{C}$ is irreducible,
	column-stochastic, and with proper column/row permutations, it can be transformed into a form with positive diagonals. Then, Perron-Frobenius theorem follows and $\rho(\overline{C}) =1 $ with other eigenvalue
	than $1$ strictly less than $\rho(\overline{C})$.
	Then, there exist (strictly positive) right-eigenvector $\overline{\pi}$  corresponding to the eigenvalue $1$ of $\overline{C}$ such that
	$\overline{C}_\infty = \lim_{k \rightarrow \infty} \overline{C}^k = \overline{\pi} \mb{1}_{n(\overline{\tau}+1)}^\top \otimes I_p $ (for example, $\overline{\pi} = \mb{1}_{n(\overline{\tau}+1}$) and the proof exactly follows.
	In case,  $\tau_{ij} \leq \overline{\tau}$, then $\overline{\pi}$ is not strictly positive but it is non-negative.  Following from \cite[Lemma~4]{delay_est}, $\overline{C}$ has some more zero eigenvalues.
	With $\overline{C}_\infty =  \overline{\pi} \mb{1}_{n(\overline{\tau}+1)}^\top \otimes I_p $, it follows that,
	\begin{align}
		\overline{C} \overline{C}_\infty &= \overline{C}_\infty. \\
		\overline{C}_\infty \overline{C}_\infty &= \overline{C}_\infty.
	\end{align}
	and $ \frac{1}{n(\overline{\tau}+1)} \mb{Y}_\infty (\mb{1}_{n(\overline{\tau}+1)} \otimes I_p)(\mb{1}_{n(\overline{\tau}+1)}^\top \otimes I_p) = \overline{C}_\infty$,
	\begin{align}
		\overline{C}\mb{a} - \mb{Y}_\infty \overline{\mb{a}} = (\overline{C}-\overline{C}_\infty)(\mb{a} - \mb{Y}_\infty \overline{\mb{a}}).
	\end{align}
	Next,
	\begin{align}
		\rho(\overline{C} - \overline{C}_\infty) = \rho(\overline{C} - \overline{\pi} \mb{1}_{n(\overline{\tau}+1)}^\top \otimes I_p)<1
	\end{align}
	Then,
	\begin{align}
		\|\overline{C}\mb{a} - \mb{Y}_\infty \overline{\mb{a}}\| \leq \|\overline{C}-\overline{C}_\infty\| \|\mb{a} - \mb{Y}_\infty \overline{\mb{a}}\|.
	\end{align}
	where $\sigma = \|\overline{C}-\overline{C}_\infty\|$.
	%Note that in \cite{addopt}, $\sigma_1 = \|{C}-{C}_\infty\|$, then similar to Lemma 5 in \cite{delay_est}.
\end{proof}

Finding an exact relation between $\sigma$ and $\sigma_1$ as a function of $\overline{\tau}$ could be one direction of future research.
Here, the relation between $\sigma=\left\| \overline{C} - \overline{C}_{\infty} \right\|$ (the augmented case) and $\sigma_1=\left\| C - C_{\infty} \right\|$ (the delay-free case) is approximated in the following lemma.
\begin{lem} \label{lem_rho_tau2}
	\cite{delay_est}
	Given a matrix $C$ with $\rho(C) < 1$  and its column-augmented form $\overline{C}$, we have
	$$\rho(\overline{C}) \leq \rho(C)^{\frac{1}{1+\overline{\tau}}} < 1 $$
	If $\rho(C) = 1$, then $\rho(\overline{C})=1$.
\end{lem}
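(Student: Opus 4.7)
The plan is to reduce the spectral problem for the block companion--like matrix $\overline{C}$ in \eqref{eq_aug_C} to a quasi-polynomial eigen-relation in the non-negative blocks $C_0,\dots,C_{\overline{\tau}}$ and then to compare $\rho(\overline{C})$ with $\rho(C)$ through the Perron--Frobenius super-eigenvalue principle, exploiting that $C=\sum_{r} C_r$ and that every $C_r$ inherits entry-wise non-negativity from $C$.

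First I would derive the characteristic equation. For a nonzero eigenvalue $\lambda$ of $\overline{C}$ with eigenvector $(v_0;v_1;\dots;v_{\overline{\tau}})$, the block rows of \eqref{eq_aug_C} give $\lambda v_r = C_r v_0 + v_{r+1}$ for $0\le r<\overline{\tau}$ and $\lambda v_{\overline{\tau}} = C_{\overline{\tau}} v_0$; a short sub-step rules out $v_0=0$, since that case would force the whole eigenvector to vanish. Telescoping yields $v_{r+1}=(\lambda^{r+1}I-\sum_{s=0}^{r}\lambda^{r-s}C_s)v_0$, and substituting into the last relation produces
\begin{equation*}
\lambda^{\overline{\tau}+1}\, v_0 \;=\; N(\lambda)\,v_0, \qquad N(\lambda) \;:=\; \sum_{r=0}^{\overline{\tau}} \lambda^{\overline{\tau}-r}\, C_r .
\end{equation*}
Passing to entry-wise moduli and using $C_r\ge 0$, I obtain $|\lambda|^{\overline{\tau}+1}|v_0|\le N(|\lambda|)|v_0|$ componentwise. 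When $|\lambda|\le 1$, each factor $|\lambda|^{\overline{\tau}-r}\le 1$ makes $N(|\lambda|)\le C$ entry-wise, and the super-eigenvalue form of Perron--Frobenius (``$Aw\ge cw$ with $0\ne w\ge 0$, $A\ge 0$, implies $\rho(A)\ge c$'') then gives $\rho(C)\ge|\lambda|^{\overline{\tau}+1}$, i.e.\ $|\lambda|\le\rho(C)^{1/(\overline{\tau}+1)}$. For the residual case $|\lambda|>1$ the weaker bound $N(|\lambda|)\le|\lambda|^{\overline{\tau}}C$ instead gives $|\lambda|\le\rho(C)$, which under $\rho(C)<1$ contradicts $|\lambda|>1$ and is therefore ruled out; this establishes the first assertion.

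For the claim $\rho(C)=1\Rightarrow\rho(\overline{C})=1$, the argument above already delivers $\rho(\overline{C})\le 1$. For the matching lower bound I would exhibit an explicit eigenvector of $\overline{C}$ at eigenvalue $1$: take $v\ge 0$, $v\ne 0$, with $Cv=v$ (Perron--Frobenius for the non-negative $C$) and set $\widehat v=\bigl(v;\,\sum_{s\ge 1}C_s v;\,\sum_{s\ge 2}C_s v;\,\dots;\,C_{\overline{\tau}} v\bigr)$. Using $Cv=v$, a block-wise check shows $\overline{C}\widehat v=\widehat v$, and $\widehat v\ne 0$ because its leading block is $v$, so $1$ is an eigenvalue of $\overline{C}$ and $\rho(\overline{C})\ge 1$. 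The step I expect to be the main obstacle is the careful bookkeeping in the telescoping that produces $N(\lambda)$, together with applying the super-eigenvalue principle when $|v_0|$ may have zero coordinates (so one must invoke the non-negative, not the strictly positive, form of Perron--Frobenius); once that core reduction is in place, the case split on $|\lambda|$ and the verification of $\widehat v$ are routine.
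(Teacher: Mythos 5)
Your proposal is correct, and it is worth noting that the paper itself gives no argument for this lemma at all --- it simply cites the authors' earlier work \cite{delay_est} --- so there is no in-text proof to compare against; your write-up supplies a complete, self-contained derivation. The two pillars of your argument both check out: (i) the block-row recursion $\lambda v_r = C_r v_0 + v_{r+1}$, $\lambda v_{\overline{\tau}} = C_{\overline{\tau}} v_0$ correctly reflects the intended companion structure of $\overline{C}$ (the display \eqref{eq_aug_C} is typographically inconsistent in its last two rows, but the column-stochastic version used throughout the paper is exactly the one you adopted), and the telescoping to $\lambda^{\overline{\tau}+1} v_0 = \bigl(\sum_{r=0}^{\overline{\tau}} \lambda^{\overline{\tau}-r} C_r\bigr) v_0$ with $v_0 \neq 0$ for $\lambda \neq 0$ is right; (ii) the subinvariance form of Perron--Frobenius ($A \ge 0$, $0 \neq w \ge 0$, $Aw \ge c\,w$ entrywise $\Rightarrow \rho(A) \ge c$) does hold for merely non-negative $w$ (e.g.\ via $A^k w \ge c^k w$ and Gelfand's formula), so your case split $|\lambda| \le 1$ versus $|\lambda| > 1$, using $N(|\lambda|) \le C$ and $N(|\lambda|) \le |\lambda|^{\overline{\tau}} C$ respectively together with $C = \sum_r C_r$, cleanly yields $\rho(\overline{C}) \le \rho(C)^{1/(1+\overline{\tau})}$ when $\rho(C)<1$. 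The explicit fixed vector $\widehat v = \bigl(v; \sum_{s\ge1} C_s v; \dots; C_{\overline{\tau}} v\bigr)$ for the case $\rho(C)=1$ also verifies block by block. Beyond filling the gap left by the citation, your argument has the added benefit of bounding \emph{every} eigenvalue of $\overline{C}$, not just the spectral radius, which is slightly stronger than what the lemma states.
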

This lemma implies that $\sigma \leq \sigma_1^{\frac{1}{1+\overline{\tau}}} < 1$, which says that by increasing $\overline{\tau}$, $\sigma$ gets closer to $1$.

In the absence of delays, from \cite{addopt} we have,
\begin{align}
	\underline{\overline{\mb{x}}}_k &:= \frac{1}{n} (\mb{1}_n \otimes I_{p})(\mb{1}^\top_n \otimes I_{p})\mb{x}_k \\
	\underline{\overline{\mb{g}}}_k &:= \frac{1}{n} (\mb{1}_n \otimes I_{p})(\mb{1}^\top_n \otimes I_{p})\mb{g}_k \\
	\mb{z}^* &:= \mb{1}_n \otimes \underline{\mb{z}}^* \\
	\underline{\mb{h}}_k &:= \frac{1}{n} (\mb{1}_n \otimes I_{p})(\mb{1}^\top_n \otimes I_{p}) \nabla \mb{f}_{k} \\
	\underline{\mb{q}}_k &:=  \frac{1}{n} (\mb{1}_n \otimes I_{p})(\mb{1}^\top_n \otimes I_{p}) \nabla \mb{f}(\underline{\overline{\mb{x}}}_k).
\end{align}
and, in the presence of communication time-delays (with max delay $\overline{\tau}$), the variables change to the augmented version as
\begin{align}
	\overline{\mb{x}}_k &:=  (\underline{\overline{\mb{x}}}_k;\underline{\overline{\mb{x}}}_{k-1}; \dots; \underline{\overline{\mb{x}}}_{k-\overline{\tau}}) \\
	\overline{\mb{g}}_k &:=  (\underline{\overline{\mb{g}}}_k;\underline{\overline{\mb{g}}}_{k-1}; \dots; \underline{\overline{\mb{g}}}_{k-\overline{\tau}})  \\
	\mb{z}^* &:= \mb{1}_{n(\overline{\tau}+1)} \otimes \underline{\mb{z}}^* \\
	\mb{h}_k &:=  (\underline{\mb{h}}_k;\underline{\mb{h}}_{k-1}; \dots; \underline{\mb{h}}_{k-\overline{\tau}}) \\
	\mb{q}_k &:=    (\underline{\mb{q}}_k;\underline{\mb{q}}_{k-1}; \dots; \underline{\mb{q}}_{k-\overline{\tau}}).
\end{align}
Let's define the following variables for the proof analysis:
\begin{align}
	\mb{t}_k &:= \left(
	\begin{array}{c}
		\|\widehat{\mb{x}}_k - \mb{Y}_\infty \overline{\mb{x}}_k\| \\
		\| \overline{\mb{x}}_k - \mb{z}^*\|_2 \\
		\|\widehat{\mb{g}}_k - \mb{Y}_\infty \mb{h}_k\|
	\end{array}	
	\right), \\
	\mb{s}_k &:= \left(
	\begin{array}{c}
		\|\mb{x}_k \|_2 \\
		0 \\
		0
	\end{array}	
	\right),\\ \label{eq_G}
	G &:= \left(
	\begin{array}{ccc}
		\sigma & 0 & \alpha\\
		\alpha c l y_- & \eta & 0  \\
		c d \epsilon l y_-(\kappa + \alpha l y y_-) & \alpha d \epsilon l^2 y y_- & \sigma + \alpha c d \epsilon l y
	\end{array}	
	\right),\\ \label{eq_Hk}
	H_k &:= \left(
	\begin{array}{ccc}
		0 & 0 & 0\\
		\alpha l y_- T \gamma_1^{k-1} & 0 & 0  \\
		(\alpha l y +2)d\epsilon l y_-^2 T \gamma_1^{k-1} & 0 & 0
	\end{array}	
	\right),
\end{align}
where
$\kappa := \|\overline{C}_k - I_{np\overline{\tau}}\|_2 = \|C - I_{np}\|_2$,
$\epsilon := \|I_{np\overline{\tau}} - \overline{C}_\infty \|_2 := \|I_{np} - C_\infty \|_2$, $\eta := \max\{1-n(\overline{\tau}+1)l,1-n(\overline{\tau}+1)s\}$. $y=\sup_k\|\mb{Y}_k\|_2$, $y_-=\sup_k\|\mb{Y}^{-1}_k\|_2$, and $c,d$ are positive constant from the equivalence of $\|\cdot\|$ and $\|\cdot\|_2$. These variables are used in the following lemmas,
\begin{lem} \label{lem_ts}
	Given the dynamics \eqref{eq_main_mat_x1}-\eqref{eq_main_mat_g1}, the following relation holds,
	\begin{align} \label{eq_lem5}
		\mb{t}_k \leq G \mb{t}_{k-1}+H_{k-1} \mb{s}_{k-1}
	\end{align}
\end{lem}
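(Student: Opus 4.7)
The plan is to derive each of the three component inequalities of $\mb{t}_k\le G\mb{t}_{k-1}+H_{k-1}\mb{s}_{k-1}$ separately from the augmented matrix dynamics \eqref{eq_main_mat_x2}--\eqref{eq_main_mat_g2}, combining the standard ADD-OPT/gradient-tracking decomposition with the delay-specific ingredients already developed in the paper. The recurring tools will be: the contraction of $\overline{C}_k$ from Lemma~\ref{lem_sigma} (factor $\sigma$); column-stochasticity of $\overline{C}_k$ so that the $\mb{1}^\top$-sums of $\widehat{\mb{x}}_k,\widehat{\mb{g}}_k$ obey averaged updates; Lipschitz gradients ($l$) and strong convexity ($s$); the transient bound $\|\mb{Y}_k-\mb{Y}_\infty\|\le T\gamma_1^k$ from Lemma~\ref{lem_nedic}, which is the sole source of the nonzero first column of $H_{k-1}$; and the norm-equivalence constants $c,d$. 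For the first row, I would substitute $\widehat{\mb{x}}_k=\overline{C}_{k-1}\widehat{\mb{x}}_{k-1}-\alpha\widehat{\mb{g}}_{k-1}$ into $\widehat{\mb{x}}_k-\mb{Y}_\infty\overline{\mb{x}}_k$ and reorganize as $\overline{C}_{k-1}(\widehat{\mb{x}}_{k-1}-\mb{Y}_\infty\overline{\mb{x}}_{k-1})$ plus $-\alpha(\widehat{\mb{g}}_{k-1}-\mb{Y}_\infty\mb{h}_{k-1})$, using the averaging identity $\overline{\mb{x}}_k=\overline{\mb{x}}_{k-1}-\alpha\,\mb{Y}_\infty^{-1}(\mb{Y}_\infty\mb{h}_{k-1})$-terms to cancel the mean-gradient piece; Lemma~\ref{lem_sigma} on the first part yields $\sigma$ and the residual yields $\alpha$, matching row~1 of $G$ exactly.

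For the second row, I would exploit column-stochasticity of $\overline{C}_{k-1}$ to write the averaged update as $\overline{\mb{x}}_k=\overline{\mb{x}}_{k-1}-\alpha\,\mb{q}_{k-1}-\alpha(\mb{h}_{k-1}-\mb{q}_{k-1})-\alpha\,\mb{Y}_\infty^{-1}(\widehat{\mb{g}}_{k-1}-\mb{Y}_\infty\mb{h}_{k-1})$, subtract $\mb{z}^*$, and apply the classical gradient-descent contraction for $F$: under strong convexity and Lipschitz gradients, $\|I-\alpha\,n(\overline{\tau}+1)\nabla^2 F\|\le\eta=\max\{1-n(\overline{\tau}+1)l,\,1-n(\overline{\tau}+1)s\}$, giving the $\eta$ coefficient on the second coordinate. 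The mismatch term is bounded by $\|\mb{h}_{k-1}-\mb{q}_{k-1}\|\le l\,y_-\,\|\widehat{\mb{x}}_{k-1}-\mb{Y}_\infty\overline{\mb{x}}_{k-1}\|$ (Lipschitzness composed with the $\mb{Y}^{-1}$-scaling between $\widehat{\mb{x}}$ and $\mb{z}$), producing the $\alpha c l y_-$ coefficient on the first coordinate; and the leftover $\|\mb{Y}_{k-1}-\mb{Y}_\infty\|\,\|\mb{x}_{k-1}\|$ falls into the $(2,1)$-entry of $H_{k-1}$ via Lemma~\ref{lem_nedic}.

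For the third row I would start from $\widehat{\mb{g}}_k-\mb{Y}_\infty\mb{h}_k=\overline{C}_{k-1}(\widehat{\mb{g}}_{k-1}-\mb{Y}_\infty\mb{h}_{k-1})+(\overline{\nabla\mb{f}}_k-\overline{\nabla\mb{f}}_{k-1})-\mb{Y}_\infty(\mb{h}_k-\mb{h}_{k-1})$, apply Lemma~\ref{lem_sigma} to the first piece for the $\sigma$ factor, and bound the gradient differences by $l\,\|\mb{z}_k-\mb{z}_{k-1}\|$. The pivotal step is decomposing $\mb{z}_k-\mb{z}_{k-1}=\mb{Y}_k^{-1}\widehat{\mb{x}}_k-\mb{Y}_{k-1}^{-1}\widehat{\mb{x}}_{k-1}$ into (i) a \emph{dispersion} piece $(I-\overline{C}_\infty)(\widehat{\mb{x}}_{k-1}-\mb{Y}_\infty\overline{\mb{x}}_{k-1})$ producing the $\epsilon$ factor, (ii) a \emph{one-step drift} $(\overline{C}_{k-1}-I)(\widehat{\mb{x}}_{k-1}-\mb{Y}_\infty\overline{\mb{x}}_{k-1})$ producing $\kappa$, (iii) the gradient increment $-\alpha\,\mb{Y}_k^{-1}\widehat{\mb{g}}_{k-1}$ producing the mixed coefficient $\alpha l y$ through the already-controlled $\widehat{\mb{g}}_{k-1}-\mb{Y}_\infty\mb{h}_{k-1}$ and $\mb{Y}_\infty\mb{h}_{k-1}$ parts, and (iv) a residual $(\mb{Y}_k^{-1}-\mb{Y}_{k-1}^{-1})\widehat{\mb{x}}_{k-1}$ absorbed into the $(3,1)$-entry of $H_{k-1}$ by Lemma~\ref{lem_nedic}. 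The hardest step, and the one that determines correctness of row~3 of $G$, is exactly this four-way decomposition of $\mb{z}_k-\mb{z}_{k-1}$: each piece must be re-expressed in terms of the three coordinates of $\mb{t}_{k-1}$ rather than raw norms, the $\mb{Y}_k^{-1}$-vs-$\mb{Y}_{k-1}^{-1}$ mismatch must be pushed entirely into $H_{k-1}$ without inflating $\sigma$ or $\eta$, and the presence of delays forces the same $\kappa$-$\epsilon$-$y$-$y_-$ bookkeeping to be carried through the augmented space of dimension $n(\overline{\tau}+1)p$; after this, assembling the coefficients of $G$ and collecting the Kronecker and equivalence constants $c,d$ is mechanical.
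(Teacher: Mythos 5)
Your proposal follows essentially the same route as the paper's proof: the same three component bounds (Step~I via the contraction of Lemma~\ref{lem_sigma}, Step~II via the averaged identity of Lemma~\ref{lem_gbar_xbar} combined with the gradient-descent contraction of Lemma~\ref{lem_bubeck}, and Step~III via the $\kappa$--$\epsilon$--$y$--$y_-$ decomposition of $\widehat{\mb{z}}_k-\widehat{\mb{z}}_{k-1}$), with the $\mb{Y}_k$-transients pushed into $H_{k-1}$ through Lemma~\ref{lem_nedic}. One small correction: the extra term $-\alpha\,\mb{Y}_\infty^{-1}(\widehat{\mb{g}}_{k-1}-\mb{Y}_\infty\mb{h}_{k-1})$ in your row-2 decomposition is unnecessary and would create a spurious $(2,3)$ entry in $G$; since Lemma~\ref{lem_gbar_xbar} gives $\overline{\mb{g}}_{k-1}=\mb{h}_{k-1}$ exactly, the averaged update is precisely $\overline{\mb{x}}_k=(\overline{\mb{x}}_{k-1}-\alpha\mb{q}_{k-1})-\alpha(\mb{h}_{k-1}-\mb{q}_{k-1})$, as in the paper.
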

\begin{proof}
	The proof is given later in Section~\ref{sec_conv}.
\end{proof}
It should be clarified that Eq.~\eqref{eq_lem5} provides a linear iterative relation between $t_k$ and $t_{k+1}$ via
matrices, $G$ and $H_{k-1}$.  Therefore, the convergence of $t_k$ follows from spectral analysis of matrices $G$ and $H$. In other words, to prove linear convergence of $\|t_k\|_2$ toward zero, the sufficient condition is to prove $\rho(G)<1$,
as well as the linear decay of matrix $H_{k-1}$, which is straightforward from Eq.~\eqref{eq_Hk} since  $0<\gamma_1<1$.

So, we need to prove that $\rho(G)<1$ as a sufficient condition to bound $\alpha$ (the spectral radius of $G$ defined in Eq.~\eqref{eq_G} being less than $1$). This is discussed in the following lemma and proved by matrix perturbation theory.
\begin{lem} \label{lem_rhoG}
	Given the matrix $G(\alpha)$ defined in Eq.~\eqref{eq_G}, then $\rho(G(\alpha))<1$ if,
	\begin{align} \label{eq_alpharange}
		0 < \alpha < \min \{\alpha_3 , \frac{1}{n(\overline{\tau}+1)l}\}
	\end{align}
	with $ \alpha_3:= \frac{\sqrt{\delta^{2}+4 n(\overline{\tau}+1) \mu(1-\sigma)^{2} \theta}-\delta}{2 \theta}$ and
	\begin{align} \nonumber
		\delta:=& n(\overline{\tau}+1)s c d \epsilon l y_-(1-\sigma+\kappa) \\ \nonumber
		\theta :=& c d \epsilon l^{2} y y_-^2(l+n(\overline{\tau}+1)s)
	\end{align}
	
\end{lem}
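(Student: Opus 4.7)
The plan is to exploit the fact that $G(\alpha)$ is entrywise non-negative for $\alpha \geq 0$ (every entry is a non-negative polynomial in $\alpha$ with non-negative coefficients). For such matrices, the M-matrix characterization gives $\rho(G(\alpha)) < 1$ if and only if $I - G(\alpha)$ is a non-singular M-matrix, which in turn is equivalent to requiring that all three leading principal minors of $I - G(\alpha)$ be strictly positive. This converts the spectral inequality into three scalar conditions on $\alpha$ that can be analyzed in sequence.

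The first two minors are handled immediately. The top-left $1\times 1$ minor is $1 - \sigma$, which is strictly positive by Lemma~\ref{lem_sigma}. The $2\times 2$ leading minor evaluates to $(1-\sigma)(1-\eta)$, and the explicit bound $\alpha < 1/(n(\overline{\tau}+1)l)$ (together with $s \leq l$ from strong convexity vs.\ Lipschitz continuity) is precisely what keeps $\eta$ strictly below $1$ in the definition $\eta = \max\{1 - n(\overline{\tau}+1)l, 1 - n(\overline{\tau}+1)s\}$ once the appropriate step-size factor is absorbed. Thus the only non-trivial condition left is $M_3 := \det(I - G(\alpha)) > 0$.

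For $M_3$ I would expand along the first row (or the second column, which has only two nonzero entries) and collect powers of $\alpha$. The constant term is $(1-\sigma)^2(1-\eta)$, which is positive, and the linear, quadratic, and cubic coefficients can be read off from $G$; in particular the linear term picks up contributions from both the $(1,3)$ entry $\alpha$ paired with the $(3,1)$ entry $cd\epsilon l y_-\kappa$, and from the $\alpha$-dependent piece of $(3,3)$. Because one looks for the smallest positive root and because the cubic term is controlled by the quadratic term on the feasible range, a clean sufficient condition is the quadratic inequality $\theta\alpha^2 + \delta\alpha - n(\overline{\tau}+1)\mu(1-\sigma)^2 < 0$, where $\theta$ and $\delta$ are exactly the groupings given in the statement. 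Applying the quadratic formula to the boundary equation and taking the positive root recovers precisely $\alpha_3$.

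The main obstacle I expect is bookkeeping the determinant expansion cleanly: the $(3,1)$ entry of $G$ itself depends on $\alpha$, so $M_3$ is genuinely cubic in $\alpha$ and one must justify why dropping (or upper-bounding) the cubic term still yields a valid sufficient condition. A cleaner alternative route, which I would fall back on if the direct expansion becomes unwieldy, is matrix perturbation theory: write $G(\alpha) = G(0) + \alpha G_1 + \alpha^2 G_2$; observe that $G(0)$ is block lower-triangular with eigenvalues $\sigma,\eta,\sigma$, so $\rho(G(0)) = \max(\sigma,\eta) < 1$; and invoke continuity of the spectral radius together with Perron--Frobenius (applicable because $G(\alpha) \geq 0$) to conclude that $\rho(G(\alpha))$ can only reach $1$ at a value of $\alpha$ where $\det(I - G(\alpha)) = 0$. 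Upper-bounding the smallest such crossing by the positive root of the quadratic above then yields the claimed threshold $\alpha_3$, and taking the minimum with $1/(n(\overline{\tau}+1)l)$ guarantees all three principal-minor conditions simultaneously.
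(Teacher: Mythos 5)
Your proposal is correct in substance but reaches the conclusion by a genuinely different route than the paper. The paper writes $G=G_0+\alpha\overline{G}$, observes $\rho(G_0)=1$, computes $\frac{d\lambda}{d\alpha}|_{\alpha=0,\lambda=1}=-n(\overline{\tau}+1)s<0$ from the characteristic polynomial to show the critical eigenvalue moves into the unit disk, and then solves the characteristic equation at $\lambda=1$ for $\alpha$, obtaining the roots $0$, $\alpha_2<0$, $\alpha_3>0$. Your route instead uses the non-negativity of $G(\alpha)$ and the M-matrix characterization: $\rho(G)<1$ iff the three leading principal minors of $I-G$ are positive. The two approaches meet at the same place, since your third minor $\det(I-G(\alpha))$ is exactly the paper's characteristic polynomial evaluated at $\lambda=1$, and both reduce to the quadratic inequality $\theta\alpha^2+\delta\alpha-n(\overline{\tau}+1)s(1-\sigma)^2<0$. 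Your approach buys something real: it is a global equivalence, whereas the paper's perturbation argument is local at $\alpha=0$ and implicitly relies on Perron--Frobenius (which you invoke explicitly) to rule out eigenvalues exiting the unit circle anywhere other than at $\lambda=1$.

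Two bookkeeping points. First, your anticipated obstacle with the cubic term does not arise: the quantity $\eta$ in $G$ is $\alpha$-dependent, with $1-\eta=\alpha n(\overline{\tau}+1)s$ on the admissible range (this is how the paper uses it, despite the $\alpha$-free typo in its displayed definition of $\eta$), so $\det(I-G(\alpha))$ has zero constant term and factors \emph{exactly} as $\alpha\,q(\alpha)$ with $q$ the quadratic above --- nothing needs to be dropped or upper-bounded, and your claimed "positive constant term $(1-\sigma)^2(1-\eta)$" is actually $O(\alpha)$. Second, your fallback continuity argument starts from a false premise: at $\alpha=0$ we have $\eta=1$, hence $\rho(G(0))=\max(\sigma,\eta)=1$, not strictly less than $1$ (the paper states this explicitly and repairs it with the derivative computation). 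Since your primary principal-minor route does not need the continuity step, this does not damage the proof, but the fallback as written would.
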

\begin{proof}
	If $\alpha < \frac{1}{n(\overline{\tau}+1)l}$ then  $\eta=1-\alpha n(\overline{\tau}+1)s$, since $l \geq s$ (see details in \cite{addopt} and \cite[Chapter~3]{bubeck2015convex}). Following matrix perturbation analysis in \cite{csl2021} we set 	 $G = G_0 + \alpha \overline{G}$ with matrix $\alpha \overline{G}$ collecting the $\alpha$-dependent terms in $G$ and other independent terms in $G_0$ as,
	\begin{align}
		G_0 &=\left(\begin{array}{ccc}
			\sigma & 0 & 0 \\
			0 & \eta & 0 \\
			c d \epsilon l y_-\kappa & 0 & \sigma
		\end{array}\right) \\
		\overline{G} &=\left(\begin{array}{ccc}
			0 & 0 & 1 \\
			c l y_- & 0 & 0 \\
			c d \epsilon l y_-\left(l y y_-\right) & d \epsilon l^{2} y y_- &  c d \epsilon ly_-
		\end{array}\right)
	\end{align}	
	It is clear that for $\alpha=0$, we have $\rho(G)=\rho(G_0)=1$. This is because we know that $ 0<\sigma<1$.
	From matrix perturbation theory \cite{bhatia2007perturbation}
	and following the characteristic polynomial of $G_\alpha$ defined as
	
	\footnotesize \begin{align}\label{eq:characteristic_polynomial}
		&\left((\lambda-\sigma)^{2}-\alpha c d \epsilon l y_-(\lambda-\sigma)\right)(\lambda-1+n(\overline{\tau}+1) \alpha s)-\alpha^{3} c d \epsilon l^{3} y y_-^{2}\nonumber\\
		& \; -\alpha(\lambda-1+n(\overline{\tau}+1) \alpha s)\left(c d \epsilon l \kappa y_-+\alpha\left(c d \epsilon l^{2} y y_-^{2}\right)\right)=0.&&
	\end{align} \normalsize
	One can conclude that
	\begin{align}
		\frac{d \lambda}{d \alpha}|_{\alpha=0, \lambda=1}=-n(\overline{\tau}+1) s<0,
	\end{align}
	This implies that if we slightly increase $\alpha$ from $0$ (i.e., going from $G_0$ to $G=G_0 + \alpha \overline{G}$), the change in the eigenvalue $\lambda=1$ is towards inside the unit circle and $\rho(G_{\alpha})<1$. Next to find the range of admissible values for $\alpha$, by setting $\lambda =1$ and solving the characteristic equation~\eqref{eq:characteristic_polynomial} we get three answers: 	\begin{align}
		\alpha_{1}&=0, \quad \alpha_{2}<0, \quad \nonumber \text{and} \\
		\alpha_{3}&=\frac{\sqrt{\delta^{2}+4 n(\overline{\tau}+1) s(1-\sigma)^{2} \theta}-\delta}{2 \theta}>0.\nonumber
	\end{align}
	which implies that for $\alpha$ in the range of Eq.~\eqref{eq_alpharange} we have $\rho(G_{\alpha})<1$.
\end{proof}	
It should be noted that the bound on $\alpha$ holds for both heterogeneous delays $\tau \leq \overline{\tau}$ and homogeneous maximum delays $\tau = \overline{\tau}$. For both cases, the gradient-tracking rate $\alpha$ satisfying Eq.~\eqref{eq_alpharange} ensures the convergence.
\begin{rem}
	As a follow-up to Eq.~\eqref{eq_alpharange} in Lemma~\ref{lem_rhoG}, when the bound on time-delay $\overline{\tau}$ becomes very large, the gradient tracking rate $\alpha$ needs to become very small. This results in low convergence rate for large delays. For $\overline{\tau} \rightarrow \infty$ we have $\alpha \rightarrow 0$, which implies that the algorithm converges so slowly that it becomes difficult to implement it. Therefore, in this paper, practically we assume reasonable bounded delays and no packet-loss.  
\end{rem}
\begin{lem} \label{lem_gbar_xbar}
	The following holds for all $k>0$,
	\begin{align} \label{eq_ghk}
		\overline{\mb{g}}_k &= \mb{h}_k, \\ \label{eq_xhk}
		\overline{\mb{x}}_{k+1} &= \overline{\mb{x}}_{k} - \alpha \mb{h}_k.
	\end{align}
\end{lem}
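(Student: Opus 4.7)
The plan is to prove both claims by induction on $k$, using the column stochasticity of the augmented matrix $\overline{C}_k$ (i.e., $\mb{1}^\top_{n(\overline{\tau}+1)} \overline{C}_k = \mb{1}^\top_{n(\overline{\tau}+1)}$) to derive conservation laws on the row-sums of the augmented state and gradient vectors. This mirrors the classical gradient-tracking argument in \cite{addopt}, lifted from $\mathbb{R}^{np}$ to the augmented space $\mathbb{R}^{np(\overline{\tau}+1)}$.

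For Eq.~\eqref{eq_ghk}, the base case at $k=0$ is immediate from Algorithm~\ref{alg_1}: since $\mb{g}_{0,i} = \nabla \mb{f}_{0,i}$ at every agent (and the pre-initial buffer entries are initialized consistently, e.g., to zero), applying the block-averaging operator $\frac{1}{n}(\mb{1}_n \otimes I_p)(\mb{1}_n^\top \otimes I_p)$ to each $np$-block of $\widehat{\mb{g}}_0$ yields $\overline{\mb{g}}_0 = \mb{h}_0$. For the inductive step, I would apply this block-averaging operator to the augmented recursion $\widehat{\mb{g}}_{k+1} = \overline{C}_k \widehat{\mb{g}}_k + (\overline{\nabla \mb{f}}_{k+1} - \overline{\nabla \mb{f}}_k)$. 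Column stochasticity of $\overline{C}_k$ preserves the block-wise row-sums up to the time-shift induced by the identity sub-blocks in Eq.~\eqref{eq_aug_C}, so the residual $\overline{\mb{g}}_{k+1} - \mb{h}_{k+1}$ telescopes back to $\overline{\mb{g}}_k - \mb{h}_k$, which vanishes by the inductive hypothesis.

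For Eq.~\eqref{eq_xhk}, I would apply the same block-averaging operator to the augmented state update $\widehat{\mb{x}}_{k+1} = \overline{C}_k \widehat{\mb{x}}_k - \alpha \widehat{\mb{g}}_k$. Column stochasticity reduces the consensus term to $\overline{\mb{x}}_k$ (again modulo shift-register alignment across blocks), while the already-established Eq.~\eqref{eq_ghk} converts the residual $\alpha \cdot (\text{block-average of } \widehat{\mb{g}}_k)$ into $\alpha \mb{h}_k$. No separate induction is needed for Eq.~\eqref{eq_xhk} beyond Eq.~\eqref{eq_ghk}, since the $\mb{x}$-update itself is a direct one-step recursion.

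The main obstacle is the bookkeeping between (i) the full augmented conservation law that column stochasticity naturally provides on the $n(\overline{\tau}+1)p$-dimensional space, and (ii) the block-wise definitions of $\overline{\mb{g}}_k$ and $\overline{\mb{x}}_k$, which stack $np$-slices tied to specific time indices $k, k-1, \dots, k-\overline{\tau}$. One must verify that the identity sub-blocks in the lower rows of $\overline{C}_k$ propagate each block's row-sum unchanged to the next iterate, while the first block-row mixes in the appropriate delay-indexed pieces $C_0, C_1, \dots, C_{\overline{\tau}}$, so that the time-indices on both sides of each identity line up correctly under iteration, and the telescoping genuinely reduces to the inductive hypothesis rather than introducing uncancelled cross-time terms.
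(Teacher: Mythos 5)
Your proof is correct and takes essentially the same route as the paper's: apply the averaging operator to the augmented recursion \eqref{eq_main_mat_g2}, use column stochasticity of $\overline{C}_k$ to collapse the consensus term, telescope (your induction) back to the initialization $\overline{\mb{g}}_0 = \mb{h}_0$ from Algorithm~\ref{alg_1}, and then obtain \eqref{eq_xhk} by substituting \eqref{eq_ghk} into the $\widehat{\mb{x}}$-update \eqref{eq_main_mat_x2}. If anything, you are more explicit than the paper about the block-wise time-index bookkeeping induced by the shift-register structure of $\overline{C}$, which the paper's proof passes over by writing a single global average.
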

\begin{proof}
	From Eq.~\eqref{eq_main_mat_g2}
	\begin{align} \nonumber
		\overline{\mb{g}}_k=   \frac{1}{n(\overline{\tau}+1)}  &(\mb{1}_{n(\overline{\tau}+1)} \otimes I_{p})(\mb{1}^\top_{n(\overline{\tau}+1)} \otimes I_{p})\\
		&(\overline{C}_k \widehat{\mb{g}}_{k-1} +   \nabla \mb{f}_{k} - \nabla \mb{f}_{k-1}).
	\end{align}
	Then, from column stochasticity of $\overline{C}_k$,
	\begin{align} \nonumber
		\overline{\mb{g}}_k &= \overline{\mb{g}}_{k-1} + \mb{h}_k - \mb{h}_{k-1} \\
		&= \overline{\mb{g}}_{0} + \mb{h}_k - \mb{h}_{0} =  \mb{h}_k
	\end{align}
	where the last equality follows from $\overline{\mb{g}}_{0} = \mb{h}_{0}$. Then, similar to \cite{addopt}, Eq.~\eqref{eq_xhk} follows by replacing Eq.~\eqref{eq_ghk} in Eq.~\eqref{eq_main_mat_x2}.
	%from \eqref{eq_main_mat_g}
	%\begin{align}
	%     \overline{\mb{g}}_k &=   \frac{1}{n(\overline{\tau}+1)} (\mb{1}_{n(\overline{\tau}+1)} \otimes I_{p})(\mb{1}^\top_{n(\overline{\tau}+1)} \otimes I_{p})\\
	%     &(\overline{C}_k \widehat{\mb{g}}_{k-1} +  \mb{b}^{\overline{\tau}+1}_1 \otimes (\nabla \mb{f}_{k} - \nabla \mb{f}_{k-1}))\\
	%     &= \overline{\mb{g}}_{k-1} + \mb{b}^{\overline{\tau}+1}_1 \otimes( \mb{h}_k - \mb{h}_{k-1}) \\
	%     &= \overline{\mb{g}}_{0} + \mb{b}^{\overline{\tau}+1}_1 \otimes(\mb{h}_k - \mb{h}_{0})
	%\end{align}
	%maybe we should consider $\mb{h}_k = \mb{b}^{\overline{\tau}+1}_1 \otimes \nabla \mb{f}_{k}$.
\end{proof}

\begin{lem} \label{lem_YY}
	For the proposed dynamics \eqref{eq_main_mat_x1}-\eqref{eq_main_mat_g1}, from lemma \ref{lem_nedic} we have,
	\begin{align}
		\|\mb{Y}_{k-1}^{-1} \mb{Y}_\infty -  I_{np}\|_2 \leq  y_- T \gamma_1^{k-1} \\
		\|\mb{Y}_{k}^{-1} - \mb{Y}_{k-1}^{-1}\|_2 \leq  2y_-^2 T \gamma_1^{k-1}
	\end{align}
\end{lem}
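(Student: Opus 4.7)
The plan is to reduce both inequalities to direct consequences of Lemma~\ref{lem_nedic} via two standard operator-norm identities, submultiplicativity, and the triangle inequality. Both bounds essentially measure deviations from the limit matrix $\mb{Y}_\infty$, which Lemma~\ref{lem_nedic} controls exponentially, and we only need to carry these deviations through one layer of matrix inversion.

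For the first inequality, I would factor out $\mb{Y}_{k-1}^{-1}$ on the left:
\begin{align}
\mb{Y}_{k-1}^{-1}\mb{Y}_\infty - I_{np} = \mb{Y}_{k-1}^{-1}\bigl(\mb{Y}_\infty - \mb{Y}_{k-1}\bigr).
\end{align}
Taking the $2$-norm, applying submultiplicativity, using the uniform bound $\|\mb{Y}_{k-1}^{-1}\|_2 \le y_-$, and invoking Lemma~\ref{lem_nedic} on the right factor yields $\|\mb{Y}_{k-1}^{-1}\mb{Y}_\infty - I_{np}\|_2 \le y_- T\gamma_1^{k-1}$ immediately.

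For the second inequality, I would use the standard resolvent-style identity
\begin{align}
\mb{Y}_k^{-1} - \mb{Y}_{k-1}^{-1} = \mb{Y}_k^{-1}\bigl(\mb{Y}_{k-1} - \mb{Y}_k\bigr)\mb{Y}_{k-1}^{-1},
\end{align}
which gives the submultiplicative bound $\|\mb{Y}_k^{-1} - \mb{Y}_{k-1}^{-1}\|_2 \le y_-^2 \|\mb{Y}_{k-1} - \mb{Y}_k\|_2$. Then a triangle inequality about the limit,
\begin{align}
\|\mb{Y}_{k-1} - \mb{Y}_k\|_2 \le \|\mb{Y}_{k-1} - \mb{Y}_\infty\|_2 + \|\mb{Y}_\infty - \mb{Y}_k\|_2 \le T\gamma_1^{k-1} + T\gamma_1^{k},
\end{align}
together with $\gamma_1 < 1$ (so that $T\gamma_1^k \le T\gamma_1^{k-1}$), produces the bound $\|\mb{Y}_{k-1} - \mb{Y}_k\|_2 \le 2T\gamma_1^{k-1}$, and therefore $\|\mb{Y}_k^{-1} - \mb{Y}_{k-1}^{-1}\|_2 \le 2y_-^2 T\gamma_1^{k-1}$ as claimed.

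There is no real obstacle here; the only subtlety is noticing that the two inverse factors in the resolvent identity can both be handled by the uniform bound $y_-$ (using that $\sup_k\|\mb{Y}_k^{-1}\|_2 = y_-$, which is guaranteed once $\mb{Y}_k\to\mb{Y}_\infty$ with $\mb{Y}_\infty$ nonsingular) and that one exponential factor absorbs the other via $\gamma_1<1$. The claim is really a clean algebraic corollary of Lemma~\ref{lem_nedic}.
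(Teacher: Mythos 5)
Your proposal is correct and follows essentially the same route as the paper, which simply defers to \cite[Lemma~8]{addopt}: the two identities $\mb{Y}_{k-1}^{-1}\mb{Y}_\infty - I = \mb{Y}_{k-1}^{-1}(\mb{Y}_\infty - \mb{Y}_{k-1})$ and $\mb{Y}_k^{-1}-\mb{Y}_{k-1}^{-1}=\mb{Y}_k^{-1}(\mb{Y}_{k-1}-\mb{Y}_k)\mb{Y}_{k-1}^{-1}$, combined with the uniform bound $y_-$, the triangle inequality through $\mb{Y}_\infty$, and $\gamma_1<1$, are exactly the standard argument the citation invokes. You have merely written out the details the paper leaves implicit, and they check out against the geometric decay supplied by Lemma~\ref{lem_nedic}.
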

\begin{proof}
	The proof follows similar to \cite[Lemma~8]{addopt}.
\end{proof}
The following lemma provides the main results on the linear convergence of Algorithm~\ref{alg_1}.
\begin{lem} \label{lem_bubeck}
	%From lemma 9 in \cite{addopt},
	Let $s$ and $l$
	be the strong-convexity and Lipschitz-continuity constants and $ \mb{z}_+ = \mb{z} - \alpha \nabla \mb{f}(\mb{z})$ for given $\mb{z}$ and $0 < \alpha < \min \{\alpha_3 , \frac{1}{n(\overline{\tau}+1)l}\}$ with $\alpha_3$ defined as in Lemma~\ref{lem_rhoG}. Then, we have
	\begin{align} \label{eq_lem9}
		\|\mb{z}_+ - \underline{\mb{z}}^*\|_2  \leq \eta_1 \| \mb{z} - \underline{\mb{z}}^* \|_2
	\end{align}
	where $\eta_1 = \max (|1 - \alpha n l| , |1 - \alpha n s|)$.
\end{lem}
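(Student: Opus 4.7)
The plan is to prove this via the classical contraction argument for gradient descent on strongly convex functions with Lipschitz gradients, specialized to the aggregate objective $F(\mb{z})=\sum_{i=1}^n f_i(\mb{z})$. The key observation is that if each $f_i$ is $s$-strongly convex and has $l$-Lipschitz gradient, then $F=\sum_i f_i$ is $ns$-strongly convex with $nl$-Lipschitz gradient, which is precisely where the factors $n l$ and $n s$ in $\eta_1$ come from. Note that the $(\overline{\tau}+1)$ factor only enters the admissible range for $\alpha$ (through Lemma~\ref{lem_rhoG}), not the contraction constant itself.

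First I would exploit optimality: since $\underline{\mb{z}}^*$ is the minimizer of $F$, we have $\nabla \mb{f}(\underline{\mb{z}}^*)=\mb{0}$, so trivially $\underline{\mb{z}}^* = \underline{\mb{z}}^* - \alpha \nabla \mb{f}(\underline{\mb{z}}^*)$. Subtracting this from the definition of $\mb{z}_+$ yields
\begin{align}
\mb{z}_+ - \underline{\mb{z}}^* = (\mb{z}-\underline{\mb{z}}^*) - \alpha\bigl(\nabla \mb{f}(\mb{z}) - \nabla \mb{f}(\underline{\mb{z}}^*)\bigr).
\end{align}
Next I would apply the fundamental theorem of calculus along the segment from $\underline{\mb{z}}^*$ to $\mb{z}$ to write
\begin{align}
\nabla \mb{f}(\mb{z}) - \nabla \mb{f}(\underline{\mb{z}}^*) = H(\mb{z}-\underline{\mb{z}}^*), \qquad H := \int_0^1 \nabla^2 \mb{f}\bigl(\underline{\mb{z}}^* + t(\mb{z}-\underline{\mb{z}}^*)\bigr)\, dt,
\end{align}
so that $\mb{z}_+ - \underline{\mb{z}}^* = (I - \alpha H)(\mb{z}-\underline{\mb{z}}^*)$. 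Strong convexity and Lipschitz gradient of $\mb{f}=\sum_i f_i$ give the spectral sandwich $n s\, I \preceq H \preceq n l\, I$, so all eigenvalues of $I - \alpha H$ lie in the interval $[1-\alpha n l,\ 1-\alpha n s]$.

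Taking the $2$-norm therefore yields $\|I-\alpha H\|_2 \leq \max(|1-\alpha n l|,|1-\alpha n s|)=\eta_1$, and combining with the displayed identity above gives exactly \eqref{eq_lem9}. To confirm $\eta_1<1$ (for later use in the linear convergence chain), I would check the two cases: the hypothesis $\alpha<\frac{1}{n(\overline{\tau}+1)l}\leq \frac{1}{n l}$ keeps $1-\alpha n l>0$, while positivity of $s$ keeps $1-\alpha n s<1$; hence both branches of the max are strictly below $1$. I do not expect a genuine obstacle here since this is a textbook argument (cf.~\cite{bubeck2015convex} and the proof of \cite[Lemma~9]{addopt}); the only care needed is tracking the factor $n$ coming from summing the $f_i$'s and verifying that the admissible range from Lemma~\ref{lem_rhoG} is contained in $(0,\tfrac{1}{nl})$ so the contraction $\eta_1<1$ actually holds.
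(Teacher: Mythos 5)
Your proposal is correct and is essentially the proof the paper intends: the paper's own ``proof'' is just a pointer to \cite[Lemma~9]{addopt} and \cite{bubeck2015convex}, and what you have written out is exactly that classical gradient-descent contraction argument, including the correct accounting of the factor $n$ from summing the $f_i$'s and the check that the admissible range of $\alpha$ forces $\eta_1<1$. The only minor caveat is that your Hessian-integral step $H=\int_0^1 \nabla^2\mb{f}(\cdot)\,dt$ presumes twice differentiability, whereas the paper only assumes Lipschitz gradients; the same bound follows without that assumption from the co-coercivity inequality for $s$-strongly convex functions with $l$-Lipschitz gradients, which is how the cited references argue.
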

\begin{proof}
	The proof follows similar to \cite[Lemma~9]{addopt} and from \cite{bubeck2015convex}.
\end{proof}

	It should be noted that large delays may cause considerable computational overhead as the dimension of the augmented matrices scales with the time-delay bound $\overline{\tau}$. However, this trade-off is inherent to worst-case delay handling in this paper; handling delayed messages explicitly enables delay-tolerant convergence and explicit stability margins for gradient-tracking rate $\alpha$ (as shown in Eq.~\eqref{eq_alpharange}) while, on the other hand, results in higher computational costs for large delays. In this paper, considering reasonable and sufficiently small delay bounds (to avoid packet loss), the convergence analysis and computational complexity are justified.

\section{Convergence and Proof of Lemma~\ref{lem_ts}} \label{sec_conv}
%\subsection{case I}:  \eqref{eq_main_mat_x2}-\eqref{eq_main_mat_g2}
This section presents the proof of Lemma~\ref{lem_ts} and the convergence analysis in three separate steps.

\textbf{Step-I:}

First, from Eq.~\eqref{eq_main_mat_x2}, Lemma~\ref{lem_sigma} and Lemma~\ref{lem_gbar_xbar} we bound $\|\widehat{\mb{x}}_k - \mb{Y}_\infty \overline{\mb{x}}_k\| $ as,
\begin{align} \nonumber
	\|\widehat{\mb{x}}_k - \mb{Y}_\infty \overline{\mb{x}}_k\| \leq& \|\overline{C}_k \widehat{\mb{x}}_{k-1} - \mb{Y}_\infty \overline{\mb{x}}_{k-1}\| \\&+ \alpha \| \widehat{\mb{g}}_{k-1} - \mb{Y}_\infty \mb{h}_{k-1}\| \label{eq_step1_final} \\ \nonumber
	\leq& \sigma \|\widehat{\mb{x}}_{k-1} - \mb{Y}_\infty \overline{\mb{x}}_{k-1}\| \\&+ \alpha \| \widehat{\mb{g}}_{k-1} - \mb{Y}_\infty \mb{h}_{k-1}\|
\end{align}

\textbf{Step-II:}

Next we bound $\|\overline{\mb{x}}_{k} - \mb{z}^*\|_2$. From Lemma~\ref{lem_gbar_xbar},
\begin{align} \label{eq_qh}
	\overline{\mb{x}}_{k} = (\overline{\mb{x}}_{k-1} - \alpha \mb{q}_{k-1}) - \alpha (\mb{h}_{k-1} - \mb{q}_{k-1})
\end{align}
Let's define $\mb{x}_+ = \overline{\mb{x}}_{k-1} - \alpha \mb{q}_{k-1}$ as the augmented version of centralized GD step. Redefining Lemma~\ref{lem_bubeck} and Eq.~\eqref{eq_lem9} for the augmented variables, we get
\begin{align}
	\|\mb{x}_+ - {\mb{z}}^*\|_2 \leq \eta \| \widehat{\mb{x}}_{k-1} - {\mb{z}}^* \|_2
\end{align}
For the second term in \eqref{eq_qh}, from Lipschitz condition we obtain,
\footnotesize
\begin{align} 
	\|\mb{h}_{k-1} - \mb{q}_{k-1}\|_2 \leq \|\frac{1}{n(\overline{\tau}+1)} (\mb{1}_{n(\overline{\tau}+1)} \mb{1}^\top_{n(\overline{\tau}+1)}) \otimes I_{p})\|_2 l \|\widehat{\mb{z}}_{k-1} - \overline{\mb{x}}_{k-1}\|_2
\end{align} \normalsize
Then,
\begin{align}\nonumber
	\|\overline{\mb{x}}_{k} - \mb{z}^*\|_2 &\leq \|\mb{x}_+ - {\mb{z}}^*\|_2 + \alpha  l \|\mb{h}_{k-1} - {\mb{q}}_{k-1}\|_2 \\
	&\leq \eta \| \widehat{\mb{x}}_{k-1} - {\mb{z}}^* \|_2 + \alpha l  \| \widehat{\mb{z}}_{k-1} - \overline{\mb{x}}_{k-1} \|_2
	\label{eq:proofxz}
\end{align}
From Eq.~\eqref{eq_main_mat_z1} (or Eq.~\eqref{eq_main_mat_z2}) and  recalling Lemma~\ref{lem_YY} we get,
\begin{align} \nonumber
	\|\widehat{\mb{z}}_{k-1} - \overline{\mb{x}}_{k-1}\|_2   \leq& \|\mb{Y}^{-1}_{k-1}(\widehat{\mb{x}}_{k-1} - \mb{Y}_\infty \overline{\mb{x}}_{k-1})\|_2 \\\nonumber
	&+  \|\mb{Y}^{-1}_{k-1}\mb{Y}_\infty - I_{np(\overline{\tau}+1)})\overline{\mb{x}}_{k-1}\|_2 \\ \nonumber
	\leq& y_-\|\widehat{\mb{x}}_{k-1} - \mb{Y}_\infty \overline{\mb{x}}_{k-1}\|_2 \\&+ y_-T \gamma_1^{k-1} \|\widehat{\mb{x}}_{k-1}\|_2
	\label{eq:step2semifinal}
\end{align} \normalsize
where we also used the fact that  $ \|\overline{x}_{k-1}\|_2 \leq \|\widehat{x}_{k-1}\|_2 $. Then, by substituting the above in Eq.~\eqref{eq:proofxz} we get,
\begin{align} \nonumber
	\|\overline{\mb{x}}_{k} - \mb{z}^*\|_2 \leq& \alpha c l \mb{y}_- \|\widehat{\mb{x}}_{k-1} \mb{Y}_\infty \overline{\mb{x}}_{k-1} \|_2  \\
	&+ \eta  \|\overline{\mb{x}}_{k-1} - \mb{z}^*\|_2 + \alpha  l \mb{y}_- T \gamma_1^{k-1} \| \widehat{\mb{x}}_{k-1} \|_2 \label{eq_step2_final}
\end{align} \normalsize

\textbf{Step-III:}

Next, we bound $\|\widehat{\mb{g}}_{k} - \mb{Y}_\infty \mb{h}_k\|_2$. From Eq.~\eqref{eq_main_mat_g2}
\begin{align} \nonumber
	\|\widehat{\mb{g}}_{k} - \mb{Y}_\infty \mb{h}_k\|_2 &\leq \|\overline{C}_k \widehat{\mb{g}}_{k-1} - \mb{Y}_\infty \mb{h}_{k-1}\|_2 \\ \label{eq_gh}
	&+ \|(\overline{\nabla \mb{f}}_k - \overline{\nabla \mb{f}}_{k-1} - \mb{Y}_\infty( \mb{h}_{k} - \mb{h}_{k-1})\|_2
\end{align}
From Lemma~\ref{lem_sigma} and Lemma~\ref{lem_gbar_xbar},
\begin{align}
	\|\overline{C}_k \widehat{\mb{g}}_{k-1} - \mb{Y}_\infty \mb{h}_{k-1}\|_2  \leq \sigma  \|(  \widehat{\mb{g}}_{k-1} - \mb{Y}_\infty \overline{\mb{g}}_{k-1})\|_2
\end{align}
Further, the second term in \eqref{eq_gh} can be recalculated as,
\footnotesize
\begin{align} \nonumber
	&\|(\overline{\nabla \mb{f}}_k - \overline{\nabla \mb{f}}_{k-1}) - \mb{Y}_\infty( \mb{h}_{k} - \mb{h}_{k-1})\|_2 =\\ \nonumber
	&\|(I_{np(\overline{\tau}+1)} - \frac{\mb{Y}_\infty}{n}(\mb{1}_{n(\overline{\tau}+1)} \otimes I_{p})(\mb{1}^\top_{n(\overline{\tau}+1)} \otimes I_{p})) (\overline{\nabla \mb{f}}_k - \overline{\nabla \mb{f}}_{k-1}) \|_2 \\
	&\leq \epsilon l \|\widehat{\mb{z}}_k - \widehat{\mb{z}}_{k-1}\|_2
\end{align} \normalsize
which follows from the Lipschitz condition. Therefore,
\footnotesize
\begin{align} \label{eq_gh2}
	\|\widehat{\mb{g}}_{k} - \mb{Y}_\infty \mb{h}_k\|_2 &\leq \sigma \|\widehat{\mb{g}}_{k-1} - \mb{Y}_\infty \mb{h}_{k-1}\|_2 + d \epsilon l \|\widehat{\mb{z}}_k - \widehat{\mb{z}}_{k-1}\|_2
\end{align}\normalsize
To bound $\|\widehat{\mb{z}}_k - \widehat{\mb{z}}_{k-1}\|_2$ we have,
\footnotesize
\begin{align} \nonumber
	\|{\mb{h}}_{k-1} \|_2 &= \|\frac{1}{n(\overline{\tau}+1)}(\mb{1}_{n(\overline{\tau}+1)} \otimes I_{p})(\mb{1}^\top_{n(\overline{\tau}+1)} \otimes I_{p}\nabla \mb{f}(\overline{\mb{x}}_{k-1}) \|_2 \\
	&\leq l \|\overline{\mb{x}}_{k-1} - \mb{z}^*\|_2
\end{align}\normalsize
Therefore, using Eq.~\eqref{eq:step2semifinal}, we obtain,
\begin{align}  \nonumber
	\|\mb{Y}_k^{-1} \widehat{\mb{g}}_{k-1}\|_2 \leq& y_-\| \widehat{\mb{g}}_{k-1} - \mb{Y}_\infty \mb{h}_{k-1} \|_2 \\\nonumber
	&+ y_- y l \|  \overline{\mb{x}}_{k-1} - \mb{z}^* \|_2 \\\nonumber
	&+ y_-^2 y l \|  \widehat{\mb{x}}_{k-1} - \mb{Y}_\infty \overline{\mb{x}}_{k-1} \|_2 \\
	&+ y_-^2 y l T \gamma_1^{k-1}\|   \widehat{\mb{x}}_{k-1} \|_2
\end{align}
Recall that $(\overline{C}-I_{n(\overline{\tau}+1)p})\mb{Y}_\infty \overline{\mb{x}}_{k-1} = \mb{0}$. Then,
\begin{align}\nonumber
	\|\widehat{\mb{z}}_k - \widehat{\mb{z}}_{k-1}\|_2 \leq& (y_- \kappa + \alpha y_-^2 y l) \|\widehat{\mb{x}}_{k-1} - \mb{Y}_\infty \overline{\mb{x}}_{k-1}\|_2 \\\nonumber
	&+ \alpha y_-  \|\widehat{\mb{g}}_{k-1} -  \mb{Y}_\infty {\mb{h}}_{k-1}\|_2 \\\nonumber
	&+ \alpha y_- y l \|\overline{\mb{x}}_{k-1} - {\mb{z}}^*\|_2 \\
	&+ (\alpha y l +2) y_-^2 T \gamma_1^{k-1} \|\widehat{\mb{x}}_{k-1}\|_2
\end{align}
Substitute the above in Eq. \eqref{eq_gh2},
\begin{align} \nonumber
	\|{\mb{g}}_{k} - \mb{Y}_\infty \mb{h}_k\|_2 \leq& (c d \epsilon l \kappa y_- + \alpha c d \epsilon l^2 y y_-^2) \|\widehat{\mb{x}}_{k-1}   \\ \nonumber
	&- \mb{Y}_\infty \overline{\mb{x}}_{k-1}\|_2 + \alpha d \epsilon l^2 y y_-  \|\overline{\mb{x}}_k - \mb{z}^*\|_2 \\ \nonumber
	&+ (\sigma + \alpha c d \epsilon l y_-) \|\widehat{\mb{g}}_{k-1} - \mb{Y}_\infty \mb{h}_{k-1}\|_2 \\
	&+ (\alpha y l +2)d \epsilon l y_-^2 T \gamma_1^{k-1} \|\widehat{\mb{x}}_{k-1}\|_2
	\label{eq_step3_final}
\end{align}
Finally, combining Eqs. \eqref{eq_step1_final}, \eqref{eq_step2_final}, and \eqref{eq_step3_final} results in Lemma~\ref{lem_ts} and proves the convergence.

\section{Simulations} \label{sec_sim}
\subsection{Academic Example}
For the experimental simulation, we consider a quadratic cost function as Eq.~\eqref{eq_prob_lr} similar to \cite{narahari} with randomly set parameters. The number of agents is set as $n=10$ nodes.
The bound on the time-delay is set $\overline{\tau}=5$ and $\alpha=0.005$. We consider convergence over two cases of random Erdos-Renyi (ER) networks: (i) static networks where the structure of the multi-agent network is time-invariant, and (ii) dynamic (switching) networks where the network structure randomly changes every $2$ iterations. The simulations are shown in Fig.~\ref{fig_delay}. For the switching case, there exist some oscillations in the residual decay due to changes in the network topology.

\begin{figure}[]
	\centering
	\includegraphics[width=2.3in]{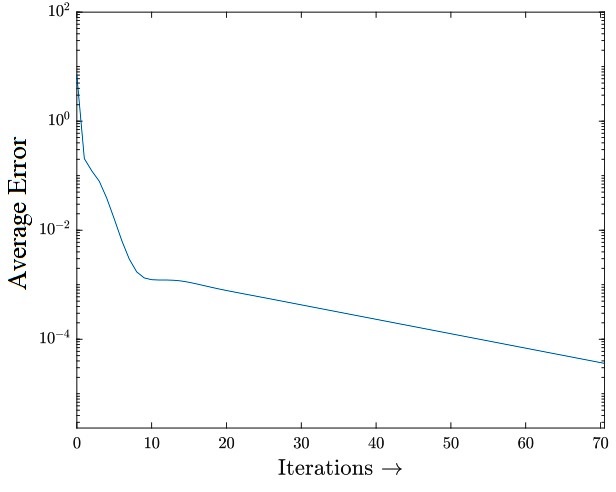} \includegraphics[width=2.3in]{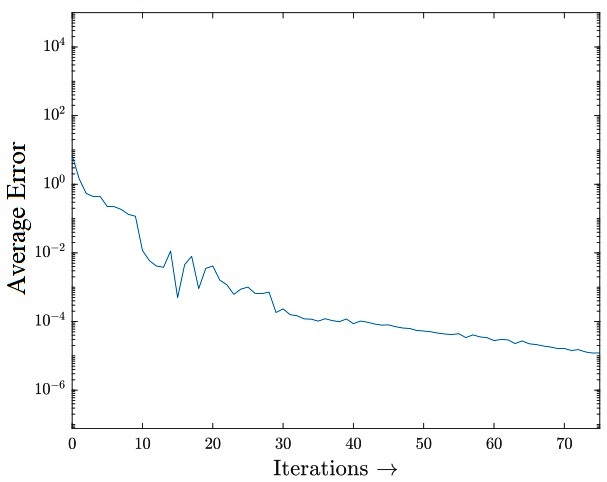}
	\caption{This figure shows the decay of the optimization residual (average error) under time-delays over (left) a static ER network and (right) a dynamic ER network. As it is clear from the figures, the algorithm converges under time-delays. There are some oscillation in the decay of the right figure, which is due to change in the network topology.} 
	\label{fig_delay}
\end{figure}

Next, we redo the simulations over an ER network to check the convergence for different values of bound on the time-delays, i.e., $\overline{\tau}= 5,10,15,20$. We set gradient-tracking rate $\alpha = 0.001$ and $\alpha = 0.005$ for this simulation. The mean-square-error (MSE) residuals at agents  for different bounds on the time-delay are shown in Fig.~\ref{fig_delays}. As it can be seen from the figure, for large value of $\overline{\tau}$, the residual decay becomes  unstable and the optimization diverges (see the residual for $\overline{\tau}= 15,20$ in the right figure for $\alpha = 0.005$). 

\begin{figure}[]
	\centering
	\includegraphics[width=2.5in]{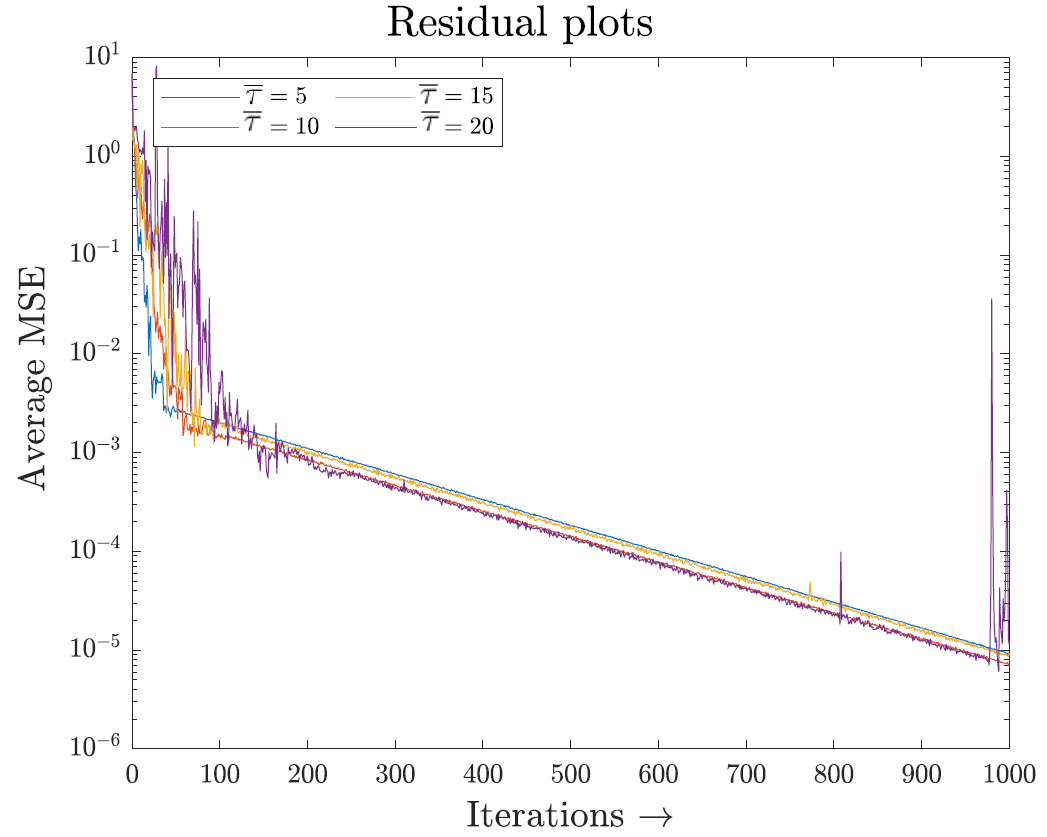}
	\includegraphics[width=2.5in]{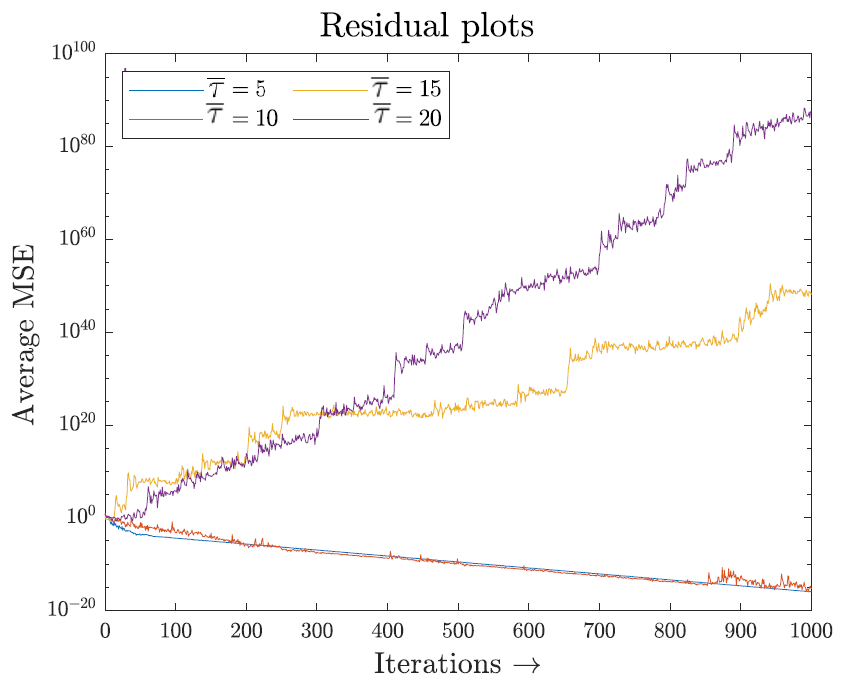}
	\caption{This figure shows the decay of the optimization residual (mean-square-error) subject to different values of time-delays over an ER network. The left figure shows the residual decay for $\alpha=0.001$ and the right figure for $\alpha=0.005$. As it is clear from the figure, for large value of $\overline{\tau}$ the residual decay becomes unstable and loses convergence.}
	\label{fig_delays}
\end{figure}
\subsection{Real Data-Set Example}
We use the MNIST dataset for distributed optimization, which is a well-known dataset in the field of machine learning and image classification. It consists of handwritten digits from $0$ to $9$ and is commonly used to train and test various classification algorithms. The dataset includes $70000$ images of handwritten digits. Each image is a $28 \times 28$ grayscale image, resulting in $784$ pixels per image, and is associated with a label from $0$ to $9$, indicating the digit it represents. A set of sampled images is shown in Fig.~\ref{fig_mnist}.
\begin{figure}[]
	\centering
	\includegraphics[width=3in]{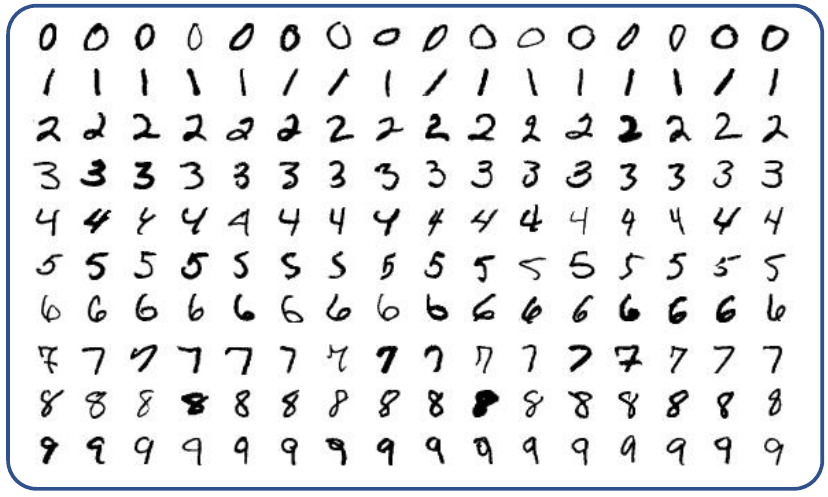}
	\caption{This figure shows a sample set of images of hand-written numbers from $0$ to $9$ taken from the MNIST data set. This data set is used for image classification via  the optimization objective \eqref{eq_F_mnist} and \eqref{eq_fij_regression}.} 
	\label{fig_mnist}
\end{figure}
The data set and image processing algorithms are taken from \cite{9827792}. We randomly select $N = 12000$ labelled images from the MNIST data set to be classified using logistic regression with a convex regularizer. The data are distributed 
among  $n=16$ agents to be cooperatively classified. In our cost optimization setup, define 
\begin{align} \label{eq_F_mnist}
	\min_{\mb{b},c} &
	F(\mb{b},c) = \frac{1}{n}\sum_{i=1}^{n} f_i(\mb{x})
\end{align}  
with every node $i$ taking a batch of $m_i=750$ sample images. Each node $i$, then, locally minimizes the following classification cost:
\begin{align}\label{eq_fij_regression}
	f_i(\mb{x}) = \frac{1}{m_i}\sum_{j=1}^{m_i} \ln(1+\exp(-(\mb{b}^\top x_{i,j}+c)y_{i,j}))+\frac{\lambda}{2}\|\mb{b}\|_2^2.
\end{align}
with $\mb{b},c$ as the classifier parameters.  The residual is defined as $F(\overline{\mb{x}}^k)-F(\mb{x}^*)$ with $\overline{\mb{x}}^k = \frac{1}{n} \sum_{i=1}^{n}\mb{x}^k_i$.
We run and compare the residual of distributed training for different existing distributed optimization techniques in the literature over an exponential network. The following optimization algorithms are used for comparison: GP \cite{nedic2014distributed}, SGP \cite{spiridonoff2020robust,nedic2016stochastic}, S-ADDOPT \cite{qureshi2020s}, and PushSAGA \cite{qureshi2021push}. The simulation results are given in Fig.~\ref{fig_mnist_sim1} for an exponential graph of $n=16$ nodes (the graph is shown in the figure). It should be mentioned that GP, SGP, S-ADDOPT, and Push-SAGA are not delay-tolerant and, thus, are simulated for delay-free case. Therefore, as expected, they show better performance in the absence of time-delays, while practically they do not converge in the presence of time-delays. On the other hand, our DTAC-ADDOPT algorithm converges in the presence of heterogeneous time-delays. For this simulation, we set $\overline{\tau}=3$.  The slower rate of convergence for DTAC-ADDOPT is due to time-delays in the data sharing as compared to the other delay-free optimization techniques.

\begin{figure}[]
	\centering
	\includegraphics[width=2.3in]{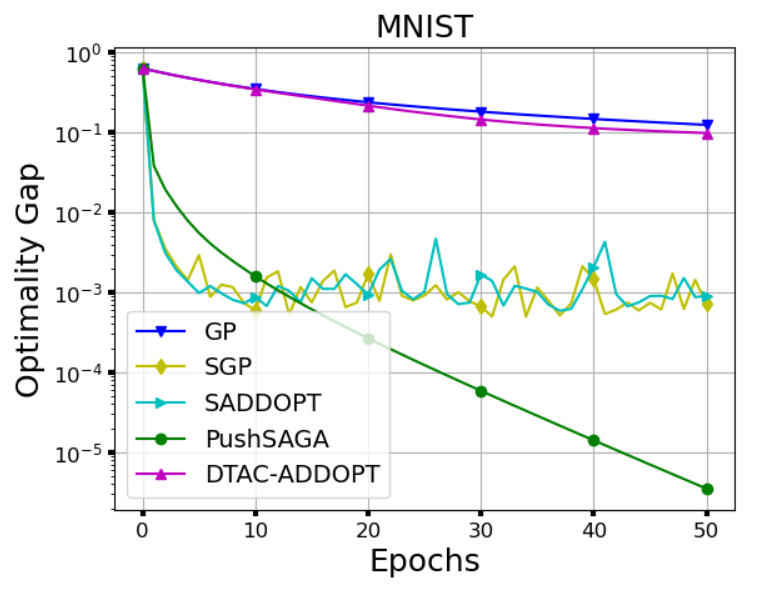} \includegraphics[width=2.1in]{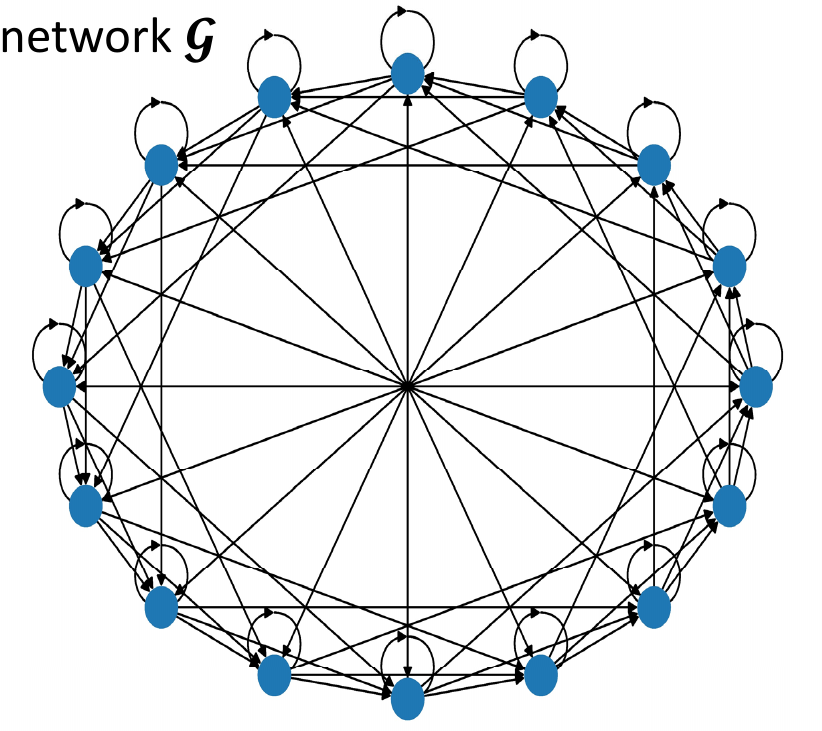}
	\caption{This simulation presents different distributed techniques over the exponential graph (given in the right figure) to optimize the objective function \eqref{eq_F_mnist} and \eqref{eq_fij_regression}. Note that only the proposed DTAC-ADDOPT is simulated subject to information-exchange delays, and the other techniques are simulated in the absence of delays.  }
	\label{fig_mnist_sim1}
\end{figure}

\section{Conclusions and Future Works} \label{sec_con}
Delay-tolerant distributed optimization over digraphs is proposed in this work. We present a distributed algorithm over a multi-agent network that is robust to time-delayed information-exchange among the agents. The delay-tolerance is shown both by mathematical proofs and experimental simulations. Future research direction includes finding a tighter bound between $\sigma_1$ and $\sigma$ based on $\overline{\tau}$ in Lemma~\ref{lem_sigma}. One can extend the convergence analysis to find maximum delay $\tau_{\max}$ for which the algorithm fails to converge when $\tau \textgreater \tau_{\max}$. Our analysis is based on bounded delays, considering no packet loss over the network, where the extension to certain classes of packet losses via standard buffering/retransmission or stochastic-analysis variants are left for future research. Moreover, distributed optimization subject to asynchronous and event-triggered operation, privacy-preserving distributed optimization \cite{mangasarian2011privacy}, and adding nonlinearities and momentum terms to reach faster convergence (similar to coupling-constrained optimization and resource allocation in \cite{ecc22}) are other open problems and directions of future research. Different applications in machine learning setups can also be considered for future research.

\bibliographystyle{elsarticle-num-names}
\bibliography{bibliography}

\end{document}